\newcommand{\assign}{:=}
\newcommand{\emdash}{---}
\newcommand{\mathd}{\mathrm{d}}
\newcommand{\op}[1]{#1}
\newcommand{\tmdfn}[1]{\textbf{#1}}
\newcommand{\tmem}[1]{{\em #1\/}}
\newcommand{\tmmathbf}[1]{\ensuremath{\boldsymbol{#1}}}
\newcommand{\tmop}[1]{\ensuremath{\operatorname{#1}}}
\newcommand{\tmstrong}[1]{\textbf{#1}}
\newcommand{\tmtextit}[1]{{\itshape{#1}}}
\newcommand{\tmtextsf}[1]{{\sffamily{#1}}}
\newcommand{\tmtexttt}[1]{{\ttfamily{#1}}}
\newcommand{\tmtextup}[1]{{\upshape{#1}}}
\newcommand{\um}{-}
\newcommand{\upl}{+}
\newtheorem{definition}{Definition}
\newtheorem{theorem}{Theorem}
\newenvironment{proof}{\vspace{1ex}\noindent{\bf Proof}\hspace{0.5em}}
	{\hfill\qed\vspace{1ex}}
\newcommand{\defEq}{\ensuremath{\assign}}
\newcommand{\arr}{\ensuremath{\rightarrow}}
\newcommand{\C}{\ensuremath{\mathfrak{C}}}
\newcommand{\uc}{uniformly continuous}
\newcommand{\lift}[1]{\ensuremath{\fastMap  #1}}
\newcommand{\ballsym}{\ensuremath{\tmmathbf{B}}}
\newcommand{\ball}[4]{\ensuremath{\ballsym^{#1}_{#2}  #3  #4}}
\newcommand{\Prop}{\ensuremath{\op{\star}}}
\newcommand{\pure}[1]{\ensuremath{\widehat{#1}}}
\newcommand{\app}{@}
\newcommand{\Map}{\ensuremath{\mars}}
\newcommand{\fastMap}{\ensuremath{\tmop{\mathsf{map}}}}
\newcommand{\fastBind}{\ensuremath{\tmop{\mathsf{bind}}}}
\newcommand{\maptwo}[1]{\ensuremath{\left\langle #1 \right\rangle}}
\newcommand{\foldmaptwo}[1]{\ensuremath{\left\{ #1 \right\}}}
\newcommand{\Q}{\ensuremath{\mathbbm{Q}}}
\newcommand{\R}{\ensuremath{\mathbbm{R}}}
\newcommand{\complete}{\ensuremath{\mathfrak{C}}}
\newcommand{\SF}{\ensuremath{\mathfrak{S}}}
\newcommand{\BF}{\ensuremath{\mathfrak{B}}}
\newcommand{\IF}{\ensuremath{\mathfrak{I}}}
\newcommand{\comp}{\text{{\tmstrong{\tmtextup{B}}}}}
\newcommand{\flip}{\text{{\tmstrong{\tmtextup{C}}}}}
\newcommand{\id}{\text{{\tmstrong{\tmtextup{I}}}}}
\newcommand{\const}{\text{{\tmstrong{\tmtextup{K}}}}}
\newcommand{\diag}{\text{{\tmstrong{\tmtextup{W}}}}}
\newcommand{\ess}{\text{{\tmstrong{\tmtextup{S}}}}}
\newcommand{\idzeroone}{\ensuremath{\id_{[0, 1]}}}
\newcommand{\glue}[3]{\ensuremath{#1 \vartriangleright #2 \vartriangleleft #3}}
\newcommand{\splitl}[2]{\ensuremath{#1 \blacktriangleright #2}}
\newcommand{\splitr}[2]{\ensuremath{#2 \blacktriangleleft #1}}
\newcommand{\ou}{\ensuremath{(0,1)_{\mathbbm{Q}}}}
\newcommand{\coq}{\tmtexttt{Coq}}
\begin{document}
\title{A computer verified, monadic, functional implementation of the integral.}

\author[ru]{Russell O'Connor}
\author[ru]{Bas Spitters}
\address[ru]{Radboud University Nijmegen}
\begin{abstract}
  We provide a computer verified exact monadic functional implementation of
  the Riemann integral in type theory. Together with previous work by
  O'Connor, this may be seen as the beginning of the realization of Bishop's
  vision to use constructive mathematics as a programming language for exact
  analysis.
\end{abstract}
\maketitle
\section{Introduction}
Integration is one of the fundamental techniques in numerical computation.
However, its implementation using floating point numbers requires continuous
effort on the part of the user in order to ensure that the results are
correct. This burden can be shifted away from the end-user by providing a
library of {\tmem{exact}} analysis in which the computer handles the error
estimates. For high assurance we use computer verified proofs that the
implementation is actually correct; see~{\cite{typesreal-article}} for an
overview. It has long been suggested that by using {\tmem{constructive
mathematics}} exact analysis and provable correctness can be
unified~{\cite{Bishop67,Bishop:num}}. Constructive mathematics provides a high
level framework for specifying computations
(Section~\ref{ss:constructiveMath}). However, Bishop~{\cite{Bishop67}} p.357
writes:

\begin{quotation}
  {\tmem{As written, this book is person-oriented rather than
  computer-oriented. It would be of great interest to have a computer-oriented
  version. Without such a version, it is hard to predict with any confidence
  what form computer-oriented abstract analysis will eventually assume. A
  thoughtful computer-oriented presentation should uncover many interesting
  phenomena.}}
\end{quotation}

Our aim is to provide such a presentation for Riemann integration. In fact, we
provide much more. We provide an implementation in dependent type theory
(Section~\ref{ss:typeTheory}). Type theory is a formal framework for
constructive mathematics~{\cite{ITT,CMCP,NPS}}. It supports the development of
formal proofs, while, at the same time, being an efficient functional
programming language with a dependent type system. We use the
Coq~{\cite{Coq,BC04}} proof assistant, which is an implementations of the
Calculus of Inductive Constructions (CIC)~{\cite{CoquandHuet,CoquandPaulin}}.
\ However, we believe that the ideas presented in this paper are general
enough to easily be developed in other implementations of type theory, such as
Martin-L\"of type theory{\footnote{In particular we do not believe that we
make any essential use of impredicativity of propositions in
Coq.}}{\cite{ITT,CMCP,NPS}}, so our presentation is mostly done in a
type-theoretic agnostic way.

Coq includes a compiler~{\cite{Compiler}} based on OCaml's virtual machine to
allow efficient evaluation{\footnote{We copy the conclusions from the
benchmarks carried out in~{\cite{Compiler}}:`...our reducer runs about as fast
as OCaml's bytecode interpreter; the speed ratio varies between 1.4 and 0.95.
Compiling the extracted Caml code with the OCaml native-code compiler results
in speed ratios between 3.5 and 5.6, which is typical of the speed-ups
obtained by going from bytecode interpretation to native-code generation.'}}.
As a feasibility study, we have implemented Riemann integration. Our
implementation is functional and structured in a monadic way. This structure
greatly simplifies the integrated development of the program together with its
correctness proof.

In constructive analysis one approximates real numbers by rational, or dyadic
numbers. Rational numbers, as opposed to the real numbers, can be represented
exactly in a computer. The real numbers are the completion of the rationals.
The completion construction can be organized in a monad, a familiar construct
from functional programming (Section~\ref{ss:completion-monad}). This
completion monad provides an efficient combination of proving and
computing~{\cite{OConnor:mscs}}. In this paper, we use a similar technique:
the integrable functions are in the completion of rational step functions
(Section~\ref{ss:Step}), and the same monadic implementation is reused.

Our contributions include:
\begin{itemize}
  \item We show that the step functions form a monad itself
  (Section~\ref{StepF-mon}) that distributes over the completion monad
  (Section~\ref{ss:dist-monad}).
  
  \item Using the applicative functor interface of the step function monad we
  lift functions and relations to step functions
  (Section~\ref{ss:applicative}).
  
  \item Using combinators we also lift theorems to reason about these
  functions and relations on step functions
  (Section~\ref{ss:liftingTheorems}).
  
  \item We define both $L^1$ and $L^{\infty}$ metrics on step functions
  (Section~\ref{metric-step}) and define integration on the completion of the
  $L^1$ space (Section~\ref{ss:IFBF}).
  
  \item We show how to embed uniformly continuous functions into this space in
  order to integrate them (Section~\ref{ss:riemann}).
  
  \item We extend our definition of Riemann integral to a Stieltjes integral
  (Section~\ref{ss:stieltjes}).
\end{itemize}
\subsection{Notation}We will use traditional notation from functional
programming for this paper. Thus $f x$ will represent function application. We
will typically use curried functions, so $f x y$ will represent $(f x) y$, and
$f$ will have type $X \Rightarrow Y \Rightarrow Z$ (meaning $X \Rightarrow (Y
\Rightarrow Z)$).

We will mostly gloss over details about equivalence relations for types. We
will use $\asymp$ to represent the equivalence relation to be used with the
types in question. We will use $\assign$for defining functions and constants.

We denote the type of the closed unit interval as $[0, 1]$, and $] 0, 1 [$
will be the type of the open interval. We denote the the open interval
restricted to the rational numbers by $] 0, 1 [_{\mathbbm{Q}}$.

\section{Background}

\subsection{\label{ss:constructiveMath}Constructive mathematics and type
theory}We wish to use constructive reasoning because constructive proofs have
a computational interpretation. For example, a constructive proof of $\varphi
\vee \psi$ tells which of the two disjuncts hold. A proof of $\exists n :
\mathbbm{N}. P n$ gives an explicit value for $n$ that makes $P n$ hold. Most
importantly, we have a functional interpretation of $\Rightarrow$ and
$\forall$. A proof of $\forall n : \mathbbm{N}. \exists m : \mathbbm{N}. R n
m$ is interpreted as a function with an argument $n$ that returns an $m$
paired with a proof of $R n m$. A proof of $\neg \varphi$, which is equal to
$\varphi \Rightarrow \bot$ by definition, is a function taking an arbitrary
proof of $\varphi$ to a proof of $\bot$ (false){\emdash}which means there
should not be any proofs of $\varphi$.

The connectives in constructive logic come equipped with their constructive
rules of inference (given by natural deduction){\cite{Sorensen}}. Excluded
middle ($\varphi \vee \neg \varphi$) cannot be deduced in general, and proof
by contradiction, $\neg \neg \varphi \Rightarrow \varphi$, is also not
provable in general.

\subsection{\label{ss:typeTheory}Dependently typed functional programming}The
functional interpretation of constructive deductions is given by the
Curry-Howard isomorphism~{\cite{Sorensen}}. This isomorphism associates
formulas with dependent types, and proofs of formulas with functional programs
of the associated dependent types. For example, the identity function $\lambda
x : A. x$ of type $A \Rightarrow A$ represents a proof of the tautology $A
\Rightarrow A$. Table~\ref{C-H} lists the association between logical
connectives and type constructors.

\begin{table}[h]\begin{center}
  \begin{tabular}{|l|l|}
    \hline
    {\tmstrong{Logical Connective}} & {\tmstrong{Type Constructor}}\\
    \hline
    implication: $\Rightarrow$ & function type: $\Rightarrow$\\
    \hline
    conjunction: $\wedge$ & product type: $\times$\\
    \hline
    disjunction: $\vee$ & disjoint union type: $+$\\
    \hline
    true: $\top$ & unit type: $()$\\
    \hline
    false: $\bot$ & void type: $\emptyset$\\
    \hline
    for all: $\forall x. P x$ & dependent function type: $\Pi x. P x$\\
    \hline
    exists: $\exists x. P x$ & dependent pair type: $\Sigma x. P x$\\
    \hline
  \end{tabular}\end{center}
  \caption{\label{C-H}The association between formulas and types given by the
  Curry-Howard isomorphism.}
\end{table}

In dependent type theory, functions from values to types are allowed. Using
types parametrized by values, one can create dependent pair types, $\Sigma x :
A. P x$, and dependent function types, $\Pi x : A. P x$. A dependent pair
consists of a value $x$ of type $A$ and a value of type $P x$. The type of the
second value depends on the first value, $x$. A dependent function is a
function from the type $A$ to the type $P x$. The type of the result depends
on the value of the input.

The association between logical connectives and types can be carried over to
constructive mathematics. We associate mathematical structures, such as the
natural numbers, with inductive types in functional programming languages. We
associate atomic formulas with functions returning types. For example, we can
define equality on the natural numbers, $x =_{\mathbbm{N}} y$, as a recursive
function:
\begin{eqnarray*}
  0 =_{\mathbbm{N}} 0 & \assign & \top\\
  S x =_{\mathbbm{N}} 0 & \assign & \bot\\
  0 =_{\mathbbm{N}} S y & \assign & \bot\\
  S x =_{\mathbbm{N}} S y & \assign & x =_{\mathbbm{N}} y
\end{eqnarray*}
One catch is that general recursion is not allowed when creating functions.
The problem is that general recursion allows one to create a fixed-point
operator, $\tmop{\mathsf{fix}} : (\varphi \Rightarrow \varphi) \Rightarrow
\varphi$, that corresponds to a proof of a logical inconsistency. To prevent
this, we allow only well-founded recursion over an argument with an inductive
type. Because well-founded recursion ensures that functions always terminate,
the language is not Turing complete. However, one can still express
fast-growing functions, such as the Ackermann function, without difficulty by
using higher-order functions~{\cite{Thompson:1991}}.

Because proofs and programs are written in the same language, we can freely
mix the two. For example, in previous work~{\cite{OConnor:mscs}}, the real
numbers are presented by the type
\begin{equation}
  \exists f : \mathbbm{Q}^{\upl} \Rightarrow \mathbbm{Q}. \forall
  \varepsilon_1 \varepsilon_2 . |f \varepsilon_1 - f \varepsilon_2 | \leq
  \varepsilon_1 + \varepsilon_2 . \label{R}
\end{equation}
A value of this type is a pair of a function $f : \mathbbm{Q}^{\upl}
\Rightarrow \mathbbm{Q}$ and a proof of $\forall \varepsilon_1 \varepsilon_2 .
|f \varepsilon_1 - f \varepsilon_2 | \leq \varepsilon_1 + \varepsilon_2$. The
idea is that a real number is represented by a function $f$ that maps any
requested precision $\varepsilon : \mathbbm{Q}^+$ to a rational approximation
of the real number. Not every function of type $\mathbbm{Q}^{\upl} \Rightarrow
\mathbbm{Q}$ represents a real number. Only those functions that have coherent
approximations should be allowed. The proof object paired with $f$ witnesses
the fact that $f$ has coherent approximations. This is one example of how
mixing functions and formulas allows one to create precise data-types.

\subsection{Extensional Equality}In this paper, we will use the
equality sign ($\op{=}$) for {\tmdfn{extensional equality}}.
Two functions $f, g$ of the same type are considered extensionally equal when,
for any input given to both functions, the outputs of the functions are
extensionally equal:
\[ f = g \defEq \forall a. f (a) = g (a) . \]
Two values of an inductive type are extensionally equal when their
constructors are the same and all parameters are extensionally equal.

Extensional equality is the finest equality we will need. However, {\coq} uses
a finer equality called intensional equality for its fundamental equality.

Another sort of equality that we will frequently use is setoid equality (see
Section~\ref{setoid}), which is generally coarser than extensional equality.

\subsection{Setoids Instead of Quotients}\label{setoid}A quotient type is a
type modulo a given equivalence relation on that type. For instance, the type
$\mathbbm{Q}$ is often considered as a quotient of the type $\mathbbm{Z}
\times \mathbbm{N}^+$. Coq does not have quotient types. One reason for this
is that it would destroy the decidability of type checking. One instead passes
around the equivalence relation in question. To do this, one often uses a data
structure called a setoid, or a Bishop set~{\cite{Bishop67,Hofmann,Capretta}}.
A setoid $(A, {\nobreak} \op{\asymp_A})$ is a type paired with an equivalence
relation on that type. Functions between setoids that preserve their
equivalence relations are called {\tmdfn{respectful}}. Proving that a function
is respectful consists of the same work in traditional mathematics needed to
prove that a function over quotients is well-defined. Respectful functions are
also called {\tmdfn{morphisms}}.

\subsubsection{Rewrite Automation}\label{RewriteDatabase}Coq supports reasoning about setoids
through its tactics \tmtexttt{setoid\_rewrite} and
\tmtexttt{setoid\_replace}~{\cite{Coen:2004}}. These tactics will
automatically create the deductions for substitution of setoid equivalent
terms into respectful functions and relations. This support makes reasoning
about setoid equivalence almost as easy as reasoning about equality in Coq.

Furthermore, Coq has the ability to define a database of rewrite lemmas. These
lemmas have terms of the form $a \asymp_A b$ for their conclusions. When they
are added to the database the user indicates which way substitution should be
performed (the same lemma can be added to different databases with different
directions). The user can then use the database as a rewrite system to process
a hypothesis or goal. The \tmtexttt{autorewrite {\tmem{<database>}}} tactic
will repeatedly try to use the lemmas in the named database to rewrite the
goal. Well crafted rewrite databases can be used to quickly transform or
simplify expressions.

\subsection{Metric spaces}\label{ss:MS}Traditionally, a metric space is
defined as a set $X$ with a metric function $d : X \times X \Rightarrow
\mathbbm{R}^{0 \upl}$ satisfying certain axioms. The usual constructive
formulation requires $d$ be a computable function. In previous
work~{\cite{OConnor:mscs}}, it was useful to take a more relaxed definition
for a metric space that does not require the metric be a function. A similar
construction can be found in the work by Richman~{\cite{Richman:2008}}.
Instead, the metric is represented via a (respectful) ball relation $\ballsym
: \mathbbm{Q}^+ \Rightarrow X \Rightarrow X \Rightarrow \star$, where $\star$
is the type of propositions, satisfying five axioms:
\begin{enumerate}
  \item $\forall x \varepsilon . \ball{}{\varepsilon}{x}{x}$
  
  \item $\forall x y \varepsilon . \ball{}{\varepsilon}{x}{y} \Rightarrow
  \ball{}{\varepsilon}{y}{x}$
  
  \item $\forall x y z \varepsilon_1 \varepsilon_2 .
  \ball{}{\varepsilon_1}{x}{y} \Rightarrow \ball{}{\varepsilon_2}{y}{z}
  \Rightarrow \ball{}{\varepsilon_1 + \varepsilon_2}{x}{z}$
  
  \item $\forall x y \varepsilon . (\forall \delta . \varepsilon < \delta
  \Rightarrow \ball{}{\delta}{x}{y}) \Rightarrow \ball{}{\varepsilon}{x}{y}$
  
  \item $\forall x y. (\forall \varepsilon . \ball{}{\varepsilon}{x}{y})
  \Rightarrow x \asymp y$
\end{enumerate}
The ball relation {\ball{}{{\varepsilon}}{x}{y}} expresses that the points $x$
and $y$ are within $\varepsilon$ of each other. We call this a ball
relationship because the partially applied relation $\ballsym^X_{\varepsilon}
x : X \Rightarrow \star$ is a predicate that represents the closed ball of
radius $\varepsilon$ around the point $x$.

For example, $\mathbbm{Q}$ can be equipped with the usual metric by defining
the ball relation as
\[ \ball{\mathbbm{Q}}{\varepsilon}{x}{y} \assign |x - y| \leq \varepsilon . \]
This definition satisfies all the required axioms.

\subsection{Uniform continuity}We are interested in the category of metric
spaces with uniformly continuous functions between them. A function $f : X
\Rightarrow Y$ between two metric spaces is {\tmdfn{uniformly continuous with
modulus}} $\mu_f : \mathbbm{Q}^+ \Rightarrow \mathbbm{Q}^+$ if
\[ \forall x_1 x_2 \varepsilon . \ball{X}{\mu_f \varepsilon}{x_1}{x_2}
   \Rightarrow \ball{Y}{\varepsilon}{(f x_1)}{(f x_2)} \text{.} \]

A function is {\tmdfn{uniformly continuous}} if it is uniformly continuous
with some modulus. We use the notation $X \rightarrow Y$ with a single bar
arrow to denote the type of uniformly continuous functions from $X$ to $Y$.
This record type consists of three parts, a function $f$ of type $X
\Rightarrow Y$, a modulus of continuity, and a proof that $f$ is uniformly
continuous with the given modulus. We will leave the projection to the
function type implicit and allow us to write $f x$ when $f : X \rightarrow Y$
and $x : X$. Our definition of uniform continuity implies that the function is
respectful.

\subsection{Monads}Moggi~{\cite{moggi:1989}} recognized that many non-standard
forms of computation may be modeled by monads{\footnote{In category theory one
would speak about the Kleisli category of a (strong) monad.}}.
Wadler~{\cite{Wadler92a}} popularized their use in functional programming.
Monads are now an established tool to structure computation with side-effects.
For instance, programs with input $X$ and output $Y$ which have access to a
mutable state $S$ can be modeled as functions of type $X \times S \Rightarrow
Y \times S$, or equivalently $X \Rightarrow (Y \times S)^S$. The type
constructor $\mathfrak{M}Y \assign (Y \times S)^S$ is an example of a monad.
Similarly, partial functions may be modeled by maps $X \Rightarrow Y_{\bot}$,
where $Y_{\bot} \assign Y + ()$ is a monad. The reader monad, $\mathfrak{M}Y
\assign Y^E$, for passing an environment implicitly will play an important
role in this paper.

The formal definition of a (strong) monad is a triple $(\mathfrak{M},
\tmop{\mathsf{return}}, \tmop{\mathsf{bind}})$ consisting of a type
constructor $\mathfrak{M}$ and two functions:
\begin{eqnarray}
  \tmop{\mathsf{return}} & : & X \Rightarrow \mathfrak{M}X \nonumber\\
  \tmop{\mathsf{bind}} & : & (X \Rightarrow \mathfrak{M}Y) \Rightarrow \mathfrak{M}X
  \Rightarrow \mathfrak{M}Y \nonumber
\end{eqnarray}
We will denote $( \tmop{\mathsf{return}} x)$ as $\hat{x}$, and $(
\tmop{\mathsf{bind}} f)$ as $\check{f}$. These two operations must satisfy the
following laws:
\begin{eqnarray}
  \tmop{\mathsf{bind}}\ \tmop{\mathsf{return}}\ a & \asymp & a \nonumber\\
  \check{f}  \hat{a}  & \asymp & f a \nonumber\\
  \check{f} ( \check{g} a) & \asymp & \tmop{\mathsf{bind}} ( \check{f} \circ
  g) a \nonumber
\end{eqnarray}
Alternatively, we can define a (strong) monad using three functions:
\begin{eqnarray}
  \tmop{\mathsf{return}} & : & X \Rightarrow \mathfrak{M}X \nonumber\\
  \tmop{\mathsf{map}} & : & (X \Rightarrow Y) \Rightarrow (\mathfrak{M}X
  \Rightarrow \mathfrak{M}Y) \nonumber\\
  \tmop{\mathsf{join}} & : & \mathfrak{M}(\mathfrak{M}X) \Rightarrow
  \mathfrak{M}X \nonumber
\end{eqnarray}
satisfying certain laws. These can be obtained from the previous presentation
of a monad by defining
\begin{eqnarray}
  \tmop{\mathsf{map}} f m & \assign & \tmop{\mathsf{bind}} (
  \tmop{\mathsf{return}} \circ f) m \nonumber\\
  \tmop{\mathsf{join}} m & \assign & \check{\id} m. \nonumber
\end{eqnarray}
where {\id} is the identity function. Conversely, given the $(
\tmop{\mathsf{return}}, \tmop{\mathsf{map}}, \tmop{\mathsf{join}})$
presentation we define
\begin{eqnarray}
  \tmop{\mathsf{bind}} f & \assign & \tmop{\mathsf{join}} \circ (
  \tmop{\mathsf{map}} f) . \nonumber
\end{eqnarray}
\subsection{Completion monad\label{ss:completion-monad}}The first monad that
we will meet in this paper is O'Connor's completion monad
$\complete$~{\cite{OConnor:mscs}}. Given a metric space $X$, the completion of
$X$ is defined by
\[ \complete X \assign \exists f : \mathbbm{Q}^{\upl} \Rightarrow X. \forall
   \varepsilon_1 \varepsilon_2 . \ball{X}{\varepsilon_1 + \varepsilon_2}{(f
   \varepsilon_1)}{(f \varepsilon_2)} . \]
The real numbers defined as the completion, $\mathbbm{R} \assign \complete
\mathbbm{Q}$, is exactly the type given in equation~\ref{R}.

The function $\tmop{\mathsf{return}} : X \rightarrow \complete X$ is the
embedding of a metric space in its completion. The function
$\tmop{\mathsf{join}} : \complete ( \complete X) \rightarrow \complete X$ is
half of this isomorphism between $\complete ( \complete X)$ and $\complete X$
(with return being the other half). Finally, a uniformly continuous function
$f : X \rightarrow Y$ can be lifted to operate on complete metric spaces,
$\tmop{\mathsf{map}} f : \complete X \rightarrow \complete Y$. Uniformly
continuity is essential in this definition of $\tmop{\mathsf{map}}$. This
means that {\complete} is a monad on the category of metric spaces with
uniformly continuous functions. One advantage of this approach is that it
helps us to work with simple representations. To specify a function from
$\mathbbm{R} \rightarrow \mathbbm{R}$, one can simply define a uniformly
continuous function $f : \mathbbm{Q} \rightarrow \mathbbm{R}$, and then
$\check{f} : \mathbbm{R} \rightarrow \mathbbm{R}$ is the required function.
Hence, the completion monad allows us to do in a structured way what was
already folklore in constructive mathematics: to work with simple, often
decidable, approximations to continuous objects; see
e.g.~{\cite{Schwichtenberg}}.

\section{Informal Presentation of Riemann Integration}

In this section, we present our work in informal constructive mathematics.
Everything presented here has been formalized in Coq, except where otherwise
noted.

We will implement Riemann integration as follows:
\begin{enumerate}
  \item Define step functions;
  
  \item Introduce applicative functors and show that step functions form an
  applicative functor;
  
  \item Show that the step functions form a metric space under both the $L^1$
  and $L^{\infty}$ norms;
  
  \item Define integrable functions as the completion of the step functions
  under the $L^1$ norm;
  
  \item Define integration first on step functions and lift it to operate on
  integrable functions;
  
  \item Define an injection from the uniformly continuous functions to the
  integrable functions in order to integrate them.
\end{enumerate}
At the end, we will see that it is natural to generalize our Riemann integral
to a Stieltjes integral.

\subsection{Step functions}\label{ss:Step}\label{ss:InductiveDef}Our first
goal will be to define (formal) step functions and some important
operations on them. For any type $X$, we first define the
inductive data type of (rational) step functions from the unit interval to
$X$, denoted by $\SF X$. A step function is either a constant function,
$\tmop{\mathsf{const}} x$, for some $x : X$, or two
step functions, $f : \SF X$ and $g : \SF X$ glued at a point in $o$,
$\tmop{\mathsf{glue}} o f g$, where $o$ must be a
rational number strictly between 0 and 1. We will sometimes write $( \tmop{\mathsf{const}} x)$ as
$\hat{x}$, and $(\tmop{\mathsf{glue}} o f g)$ as $\glue{f}{o}{g}$.

\begin{definition}
  The rules for constructing the inductive data type {\SF}:
          \[\frac{x : X}{\tmop{\mathsf{const}} x : \SF (X)}\qquad
  \frac{o : \ou \quad f : \SF (X)\quad g :\SF (X)}{\glue{f}{o}{g:{\SF}(X)}}\]
\end{definition}

The elements of this inductive type are intended to be interpreted as step
functions on $[0, 1]$. The interpretation of $\hat{x}$ is the constant
function on $[0, 1]$ returning $x$. The interpretation of {\glue{f}{o}{g}} is
$f$ squeezed into the interval $[0, o]$ and $g$ squeezed into the interval
$[o, 1]$. In this sense $f$ and $g$ are ``glued'' together.

\begin{figure}[h]\begin{center}
  \resizebox{3in}{!}{\includegraphics{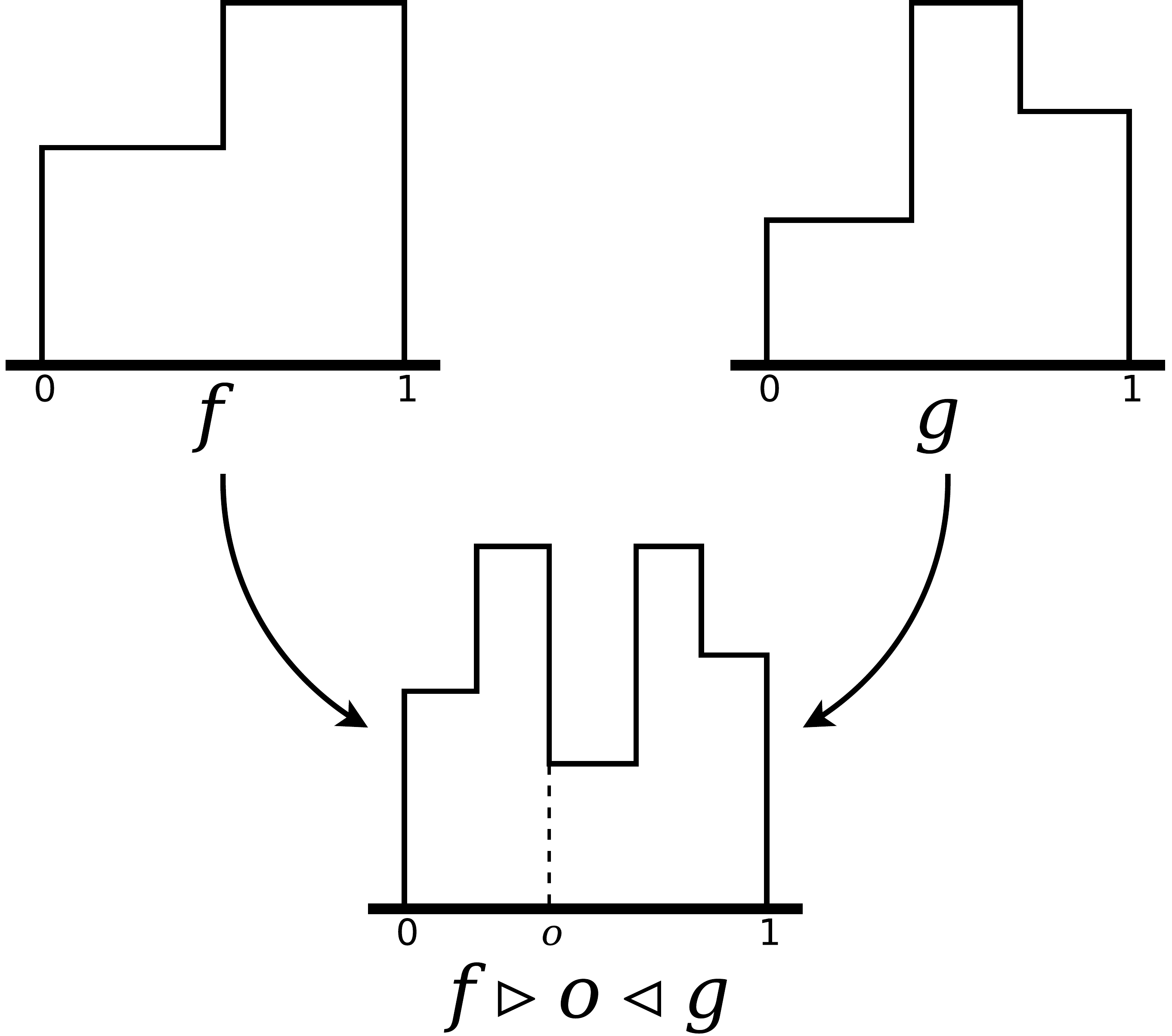}}\end{center}
  \caption{Given two step functions $f$ and $g$, the step function
  $\glue{f}{o}{g}$ is $f$ squeezed into $[0, o]$ and $g$ squeezed into $[o,1]$.}
\end{figure}

Even though we call step functions ``functions'', they are not really
functions, and we never formally interpret them as functions. They are a
formal structure that takes the place of step functions from classical
mathematics. It does not matter that our informal interpretation of
$\glue{f}{o}{g}$ is not well defined at $o$, because the step functions are
intended for integration, not for evaluation at a point.

One can see that this inductive type is a binary tree whose nodes hold data of
type {\ou}, and whose leaves have type $X$. We work with an equivalence
relation on this binary tree structure that identifies different ways of
constructing the same step function. Informally, this is the equivalence
relation induced by our interpretation; the formal equivalence relation is
defined in Section \ref{ss:SFasAF}.

We define two sorts of inverses to \tmtextsf{glue} which we call
left-split and
right-split. Given $f : \mathfrak{S}X$ and $a : \ou$ we define left-split
(written as $\splitl{f}{a} : \mathfrak{S}X$) and right-split
(written as $\splitr{f}{a} : \mathfrak{S}X$) as follows:

\begin{definition}
  \begin{eqnarray*}
    \text{$\splitl{\pure{x}}{a}$} & \defEq & \pure{x}\\
    \splitl{( \glue{f_l}{o}{f_r})}{a} & \defEq & \left\{\begin{array}{ll}
      \splitl{f_l}{\frac{a}{o}}  & \text{(if $a < o$)}\\
      f_l & \text{(if $a = o$)}\\
      \glue{f_l}{\frac{o}{a}}{\splitl{(f_r}{\frac{a - o}{1 - o})}} & \text{(if
      $a > o$)}
    \end{array}\right.\\
    \splitr{\pure{x}}{a} & \defEq & \pure{x}\\
    \splitr{( \glue{f_l}{o}{f_r})}{a} & \defEq & \left\{\begin{array}{ll}
      \glue{( \splitr{f_l}{\frac{a}{o}})}{\frac{o - a}{1 - a}}{f_r}  &
      \text{(if $a < o$)}\\
      f_r & \text{(if $a = o$)}\\
      \splitr{f_r}{\frac{a - o}{1 - o}} & \text{(if $a > o$)} \text{.}
    \end{array}\right.
  \end{eqnarray*}
\end{definition}

Informally, the left split ({\splitl{f}{a}}) takes the portion of $f$ on the
interval $[0,a]$ and scales it up to the full interval $[0,1]$. The right
split ({\splitr{f}{a}}) does the same thing for the portion of $f$ on the
interval $[a,1]$. We have that
\[\glue{\splitl{(f}{a)}}{a}{\splitr{f)}{(a}\asymp f}\]
holds, which means that gluing back the left and right pieces of a
step function split at $a$ returns an equivalent function back. However, this
process does not generally return an identical representation. The formal
definition of the equivalence relation is defined later in Section
\ref{ss:SFasAF}.

The inductive type for step functions has an associated catamorphism which we
call $\tmop{\mathsf{fold}}$.

\begin{definition}
  \begin{eqnarray*}
    \tmop{\mathsf{fold}} & : & (X \Rightarrow Y) \Rightarrow ( \ou \Rightarrow
    Y \Rightarrow Y \Rightarrow Y) \Rightarrow \SF X \Rightarrow Y\\
    \tmop{\mathsf{fold}} \varphi \psi \hat{x} & \assign & \varphi x\\
    \tmop{\mathsf{fold}} \varphi \psi ( \glue{f}{o}{g}) & \assign & \psi o (
    \tmop{\mathsf{fold}} \varphi \psi f) ( \tmop{\mathsf{fold}} \varphi \psi
    g) \text{.}
  \end{eqnarray*}
\end{definition}

This $\tmop{\mathsf{fold}}$ operation is used in many places. For instance, it
is used to define two metrics on step functions (Section~\ref{metric-step}) or
to check whether a property holds globally on $[0,1]$
(Section~\ref{ss:SFasAF}). Not every $\tmop{\mathsf{fold}}$ respects the
equivalence relation on step functions, so we need to prove that each
$\tmop{\mathsf{fold}}$ instance we use respects the equivalence relation.

\subsection{Step functions form a monad}\label{StepF-mon}The step function
type constructor $\SF$ forms a monad similar to the reader monad $\lambda X.
X^{[0,1]}$~{\cite{Wadler92b}}. The $\tmop{\mathsf{return}}$ of {\SF} is the
constant function, $\tmop{\mathsf{map}}$ is defined in the obvious way using
$\tmop{\mathsf{fold}}$, and the $\tmop{\mathsf{join}}$ from $\SF ( \SF X)$ to $\SF X$ is the formal
variant of the $\tmop{\mathsf{join}}$ function from the reader monad,
$\tmop{\mathsf{join}} f z \defEq f z z$, which considers a step function of
step functions as a step function of two inputs and returns the step function
of its diagonal:

\begin{definition}
  \begin{eqnarray*}
    \tmop{\mathsf{join}}  \pure{f} & \defEq & f\\
    \tmop{\mathsf{join}}  \left( \glue{f}{o}{g} \right) & \defEq &
    \glue{\splitl{\tmop{\mathsf{join}} ( \tmop{\mathsf{map}} (\lambda x.x}{o)
    f)}}{o}{\tmop{\mathsf{join}} ( \tmop{\mathsf{map}} (\lambda x.
    \splitr{x}{o}) g)} \text{.}
  \end{eqnarray*}
\end{definition}

Rather than use these monadic functions, we use the applicative functor
interface to this monad.

\subsection{Applicative functors}\label{ss:applicative}Let $\mathfrak{M}$ be a
strong monad. To lift a function $f : X \Rightarrow Y$ to a function
$\mathfrak{M}X \Rightarrow \mathfrak{M}Y$, we use $\tmop{\mathsf{map}} : (X
\Rightarrow Y) \Rightarrow \mathfrak{M}X \Rightarrow \mathfrak{M}Y$. Lifting a
function with two curried arguments is possible using a similar function
$\tmop{\mathsf{map2}}$. However, to avoid having to write a function
$\tmop{\mathsf{map}} \mathit{n}$ for each natural number $n$, one can use the
theory of applicative functors. An consists of a type constructor
$\mathfrak{T}$ and two functions:
\begin{eqnarray*}
  \tmop{\mathsf{pure}} & : & X \Rightarrow \mathfrak{T}X\\
  \tmop{\mathsf{ap}} & : & \mathfrak{T}(X \Rightarrow Y) \Rightarrow
  \mathfrak{T}X \Rightarrow \mathfrak{T}Y
\end{eqnarray*}
The function $\tmop{\mathsf{pure}}$ lifts any value inside the functor. The
$\tmop{\mathsf{ap}}$ function applies a function inside the functor to a value
inside the functor to produce a value inside the functor. We
denote $( \tmop{\mathsf{pure}} x)$ by $\pure{x}$, as was done for monads, and
we denote $( \tmop{\mathsf{ap}} f x)$ by $f \app x$. An applicative functor
must satisfy the following laws~{\cite{mcbride:2008}}:

\begin{center}
  \begin{tabular}{ll}
    $\pure{\id}  \app v \asymp v$ & Identity\\
    $\pure{\comp}  \app u \app v \app w \asymp u \app (v \app w)$ &
    Composition\\
    $\pure{f}  \app  \pure{x}  \asymp  \pure{f x}$ & Homomorphism\\
    $u \app  \pure{y}  \asymp  \pure{\tmop{ev}_y}  \app u$ & Interchange
  \end{tabular}
\end{center}

Where {\comp} and {\id} are the composition and identity combinators
respectively (see Section \ref{ss:Combinator}) and $\tmop{ev}_y \assign
\lambda f. f y$ is the function which evaluates at $y$.

Every strong monad induces the canonical applicative
functor~{\cite{mcbride:2008}} where
\begin{eqnarray*}
  \tmop{\mathsf{pure}} & \assign & \tmop{\mathsf{return}}\\
  f \app x & \assign & \tmop{\mathsf{bind}} (\lambda g. \tmop{\mathsf{map}} g
  x) f.
\end{eqnarray*}
As the name suggests, every applicative functor can be seen as a functor.
Given an applicative functor $\mathfrak{T}$, we define
$\tmop{\mathsf{map}} : (X \Rightarrow Y) \Rightarrow \mathfrak{T}X \Rightarrow \mathfrak{T}Y$ as
\[ \tmop{\mathsf{map}} f x \defEq \hat{f}  \app x. \]
When $\mathfrak{T}$ is generated from a monad, this definition of
\tmtextsf{map} is equivalent to the definition of \tmtextsf{map} associated
with the monad.

\subsection{The step function applicative functor}\label{ss:SFasAF}The
$\tmop{\mathsf{ap}}$ function for
step functions $\mathfrak{S}$ applies a step function of functions to a step
function of argument pointwise. It is formally defined as follows:
\begin{definition}
  \begin{eqnarray*}
    \pure{f} \app \pure{x} & \defEq & \pure{f (x)}\\
    \pure{f} \app ( \glue{x_l}{o}{x_r}) & \defEq & \glue{( \pure{f} \app
    x_l)}{o}{( \pure{f} \app x_r})\\
    \glue{(f_l}{o}{f_r)} \app x & \defEq & \glue{(f_l \app (
    \splitl{x}{o}))}{o}{(f_r \app ( \splitr{x}{o}))} \text{.}
  \end{eqnarray*}
\end{definition}

For step functions $\SF$, we denote $( \tmop{\mathsf{map}} f
x)$ by $f \Map x$. This notation is meant to suggest the similarity with the
composition operation, which is the definition of $\tmop{\mathsf{map}}$ for
the reader monad $\lambda X. X^{[0,1]}$.

\begin{definition}
  The binary version of $\tmop{\mathsf{map}}$ is defined in terms of
$\tmop{\mathsf{map}}$
  and $\tmop{\mathsf{ap}}$.
  \[ \tmop{\mathsf{map} 2} f a b \assign f \Map a \app b. \]
\end{definition}

Higher arity maps can be defined in a similar way; however, we found it more
natural to simply use $\tmop{\mathsf{map}}$ and $\tmop{\mathsf{ap}}$
everywhere.

We will often use $\tmop{\mathsf{map} 2}$ to lift infix operations. Because of
this, we give it a special notation.

\begin{definition}
  If $\circledast$ is some infix operator such that $\lambda x
  y. x \circledast y : X \Rightarrow Y \Rightarrow Z$, then we define
  \[ f \maptwo{\circledast} g \assign (\lambda x y.x \circledast y) \Map f
     \app g, \]
  where $f : \SF X$, $g : \SF Y$, and $f \maptwo{\circledast} g : \SF Z$.
\end{definition}

For example, if $f, g : \SF \mathbbm{Q}$ are rational step functions, then $f
\maptwo{\op{-}} g$ is the pointwise difference between $f$ and $g$ as a
rational step function.

We can lift relations to step functions as well. A relation is simply a
function to $\Prop$, the type of propositions. Thus a binary relation
$\propto$ has a type $\lambda x y. x \propto y : X \Rightarrow {\nobreak} Y
\Rightarrow {\nobreak} \Prop$. If we use $\tmop{\mathsf{map} 2}$, we end up
with an function $\lambda f g. f \maptwo{\op{\propto}} g : \SF X \Rightarrow
\SF Y \Rightarrow \SF \Prop$. The result is not a proposition, but rather a
step function of propositions. Classically, this corresponds to a step
function of Booleans. In other words, $\SF \Prop$ represents a type of step
characteristic functions on $[0,1]$.

Each way of turning a characteristic function into a proposition determines a
different kind of predicate lifting~{\cite{Schroder:2005}}. For our purposes,
we are interested in the one that asks the characteristic function to hold
everywhere. The function $\tmop{\mathsf{fold}}_{\Prop} : \SF
\Prop \Rightarrow \Prop$ does this by folding conjunction over a step
function.

\begin{definition}
  $\tmop{\mathsf{fold}}_{\Prop} \assign \tmop{\mathsf{fold}} ( \id, \lambda o
  p q. p \wedge q)$.
\end{definition}

When this function is composed with \tmtextsf{map2}, the result lifts a
relation to a relation on step functions.

\begin{definition}
  $f \foldmaptwo{\op{\propto}} g \assign
  \tmop{\mathsf{fold}}_{\Prop} (f \maptwo{\op{\propto}} g)$.
\end{definition}

For example, we define equivalence on step functions by lifting the
equivalence relation on $X$.

\begin{definition}
  $f \asymp_{\mathfrak{S}X} g \assign f \foldmaptwo{\op{\asymp_X}} g$.
\end{definition}

Two step functions are equivalent if they are pointwise equivalent everywhere.
Similarly, we define a partial order on step functions by lifting the
inequality relation on $\mathbbm{Q}$.

\begin{definition}
  $f \leq_{\SF \mathbbm{Q}} g \assign f\{\leq_{\mathbbm{Q}}\}g$.
\end{definition}

A step function $f$ is less than a step function $g$ if $f$ is pointwise less
than $g$ everywhere.

\subsection{Two metrics for step functions}\label{metric-step}The step
functions over the rational numbers, $\SF \mathbbm{Q}$, form a metric space in
two ways, with the $L^{\infty}$ metric
and the $L^1$ metric. We first define the two
norms on the step functions.

\begin{definition}
  \begin{eqnarray*}
    \|f\|_{\infty} & \assign & \tmop{\mathsf{fold}}_{\sup} (
    \tmop{\mathsf{abs}}  \Map f)\\
    \|f\|_1 & \assign & \tmop{\mathsf{fold}}_{\tmop{affine}} (
    \tmop{\mathsf{abs}}  \Map f)
  \end{eqnarray*}
  where
  \begin{eqnarray*}
    \tmop{\mathsf{fold}}_{\sup} & \assign & \tmop{\mathsf{fold}}  \id (\lambda
    o x y. \mathsf{\max} x y)\\
    \tmop{\mathsf{fold}}_{\tmop{affine}} & \assign & \tmop{\mathsf{fold}}  \id
    (\lambda o x y. ox + (1 - o) y)
  \end{eqnarray*}
  and $\tmop{\mathsf{abs}} : \mathbbm{Q} \Rightarrow
  \mathbbm{Q}$ is the absolute value function on $\mathbbm{Q}$.
\end{definition}

The function $\tmop{\mathsf{fold}}_{\sup} : \SF \mathbbm{Q} \Rightarrow
\mathbbm{Q}$ returns the supremum of the step function, while the function
$\tmop{\mathsf{fold}}_{\tmop{affine}} : \SF \mathbbm{Q} \Rightarrow
\mathbbm{Q}$ returns the integral of a step function.

Next, the metric distance between two step functions is defined.

\begin{definition}
  \begin{eqnarray*}
    d^{\infty} f g & \assign & \|f \maptwo{\op{-}} g\|_{\infty}\\
    d^1 f g & \assign & \|f \maptwo{\op{-}} g\|_1 .
  \end{eqnarray*}
\end{definition}

Finally, the distance relations are defined in terms of the distance
functions.

\begin{definition}
  \begin{eqnarray*}
    \ball{\SF^{\infty} \mathbbm{Q}}{\varepsilon}{f}{g} & \assign & d^{\infty}
    f g \leq \varepsilon\\
    \ball{\SF^1 \mathbbm{Q}}{\varepsilon}{f}{g} & \assign & d^1 f g \leq
    \varepsilon .
  \end{eqnarray*}
\end{definition}

When we need to be clear which metric space
is being used, we will use the notation $\SF^{\infty} \mathbbm{Q}$ or $\SF^1
\mathbbm{Q}$.

The two fold functions defined in this section are uniformly continuous for
their respective metrics.
\begin{eqnarray*}
  \tmop{\mathsf{fold}}_{\sup} & : & \SF^{\infty} \mathbbm{Q} \arr
  \mathbbm{Q}\\
  \tmop{\mathsf{fold}}_{\tmop{affine}} & : & \SF^1 \mathbbm{Q} \arr
  \mathbbm{Q}
\end{eqnarray*}
The identity function is {\uc} in one direction, $\iota :
\SF^{\infty} \mathbbm{Q} \arr \SF^1 \mathbbm{Q}$; however, the other direction
is not {\uc}.

The metrics $\mathfrak{S}^{\infty} X$ and $\SF^1 X$ can be defined for any
metric space $X$:
\begin{eqnarray}
  \ball{\SF^{\infty} X}{\varepsilon}{f}{g} & \assign &
  \tmop{\mathsf{fold}}_{\Prop} ( \ballsym^X_{\varepsilon} \Map f \app g)
  \nonumber\\
  \ball{\SF^1 X}{\varepsilon}{f}{g} & \assign & \exists h : \SF
  \mathbbm{Q}^{\upl} . \tmop{\mathsf{fold}}_{\Prop} ( \ballsym^X \Map h \app f
  \app g) \wedge \|h\|_1 \leqslant \varepsilon \nonumber
\end{eqnarray}
We have implemented the generic $\mathfrak{S}^{\infty} X$ metric in our
formalization. However, for the $L^1$ space, we have only implemented the
specific $\SF^1 \mathbbm{Q}$ metric.

\subsection{Integrable functions and bounded functions}\label{ss:IFBF}The
bounded functions and the integrable functions are defined as the completion of the step functions
under the $L^{\infty}$ and the $L^1$ metrics respectively.

\begin{definition}
  \begin{eqnarray*}
    \BF & \assign & \C \circ \SF^{\infty}\\
    \IF & \assign & \C \circ \SF^1 .
  \end{eqnarray*}
\end{definition}

In Section~\ref{ss:InductiveDef}, we informally interpreted elements of $\SF
X$ as (partially defined) functions on $[0, 1]$. Similarly, we can informally
interpret each bounded function as a (partially defined) function. Consider $f
: \BF \mathbbm{Q}$. Define $g_n \assign f \left( \frac{1}{n} \right)$. Then
$\underset{n \rightarrow \infty}{\lim} ~ g_n (x)$ exists for all points $x$ in
$[0,1]$ except perhaps for the (rational) splitting points of the step
functions $g_n$. At the points where this limit is defined, it is
(classically) continuous.

To every Riemann integrable function on $[0,1]$ we can associate
an element in $\IF \mathbbm{Q}$. Moreover, functions $f$ and $g$ such that
$\int |f - g| \asymp 0$ will be assigned to equivalent elements in $\IF
\mathbbm{Q}$. This definition can be extended to every {\tmem{generalized}}
Riemann integrable function, where a function $h$ is generalized Riemann
integrable if $h_n \defEq \mathsf{\max} \left( \mathsf{\min} h \hat{n} \right)
\left( \widehat{\um n} \right)$ is integrable for each $n$ and the limit
of$\int h_n$ converges (even though $h_n$ may not converge pointwise
everywhere). Conversely, we can informally interpret every element $f$ of $\IF
\mathbbm{Q}$ as a generalized Riemann integrable function. Define $g_n$ as the
sequence
\[ g_n \defEq f \left( \frac{1}{2^{2 n + 1}} \right) \text{.} \]
By the fundamental lemma of integration~{\cite{lang:1993}}, $g_n$ converges
pointwise almost everywhere. Let $g$ be this pointwise limit. Then $g$ is a
generalized Riemann integrable function associated with $f$.

The bounded functions have a supremum operation,
$\tmop{\mathsf{sup}}: \BF \mathbbm{Q} \arr \R$ and, similarly, the integrable
functions have an integration operation, $\int: \IF \mathbbm{Q} \rightarrow
\R$ which are defined by lifting the two folds from the previous section.

\begin{definition}\label{IntegralQ}
  \begin{eqnarray*}
    \tmop{\mathsf{sup}} f &  \defEq & \fastMap_{\C}  \tmop{\mathsf{fold}}_{\tmop{\mathsf{sup}}}
f\\
    \int f & \defEq & \fastMap_{\C}  \tmop{\mathsf{fold}}_{\tmop{affine}} f
  \end{eqnarray*}
\end{definition}

There is an injection from the bounded functions into the integrable functions
defined by lifting the injection on step functions: $\lift{\iota} : \BF
\mathbbm{Q} \rightarrow \IF \mathbbm{Q}$. However, there is no injection from
integrable functions to bounded functions. Thus bounded functions can be
integrated, but integrable functions may not have a supremum.

\subsection{Riemann integral}\label{ss:riemann}The process for integrating a
function is as follows. Given a function $f$, one needs to find an equivalent
representation of $f$ as an integrable function and then this integrable
function can be integrated. We will consider how to integrate {\uc} functions
on $[0,1]$, which is a useful class of functions to integrate.

We convert a {\uc} function to an integrable function by a two step process.
First, we will convert it to a bounded function, and then the bounded function
can be converted to an integrable function using the injection defined in the
previous section.

To produce a bounded function, one needs to create a step function that
approximates $f$ within $\varepsilon$ for any value $\varepsilon :
\mathbbm{Q}^+$. The usual way of doing this is to create a step function where
each step has width no more than $2 \left( \mu_f \varepsilon \right)$. The
value at each step is taken by sampling the function at the center of the
step.
\begin{figure}[h]\label{fig:Riemann}
  \hspace*{-2.5cm}\includegraphics[height=4.3cm]{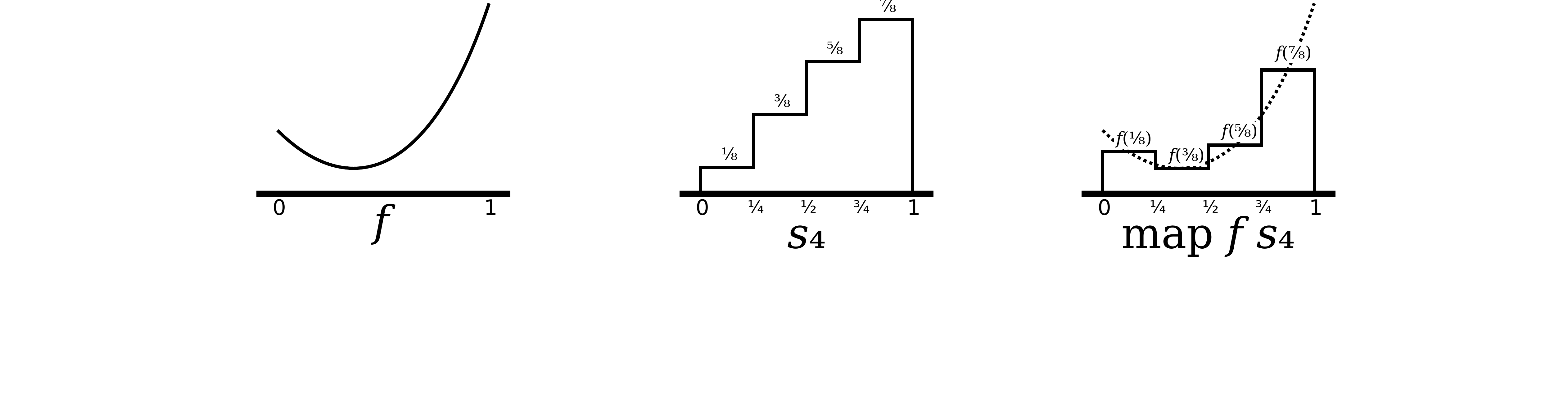}\vspace*{-17mm}
  \caption{Given a uniformly continuous function $f$ and a step function $s_4$
  that approximates the identity function, the step function $(
  \tmop{\mathsf{map}} f s_4)$ (or $f \Map s_4$) approximates $f$ in the
  familiar Riemann way.}
\end{figure}

When developing the above, it became clear that one can
achieve the desired result by creating a step function whose values are the
sample inputs, and then mapping $f$ over these ``sampling step-functions''
(see Figure~\ref{fig:Riemann}). In fact, the limit of these
``sampling step-functions'' is simply the identity function on $[0,1]$
represented as a bounded function, $\idzeroone: \BF \mathbbm{Q}$ (see
Section~\ref{ss:id01}). Given any {\uc} function $f: \mathbbm{Q} \arr
\mathbbm{Q}$, we can prove that $\tmop{\mathsf{map}}_{\SF^{\infty}} f :
\SF^{\infty} \mathbbm{Q} \arr \SF^{\infty} \mathbbm{Q}$ is {\uc}. We can then
lift again to operate on bounded functions, $\fastMap_{\C} \left(
\tmop{\mathsf{map}}_{\SF^{\infty}} f \right) : \BF \mathbbm{Q} \arr \BF
\mathbbm{Q}$. Applying this to {\idzeroone} yields $f$ restricted to $[0,1]$
as a bounded function, which can then be converted to an integrable function
and integrated.

\begin{definition}
  $\int_{[0,1]} f \assign \int \left( \fastMap_{\C} \iota
  \left( \fastMap_{\C} \left( \tmop{\mathsf{map}}_{\SF^{\infty}} f \right)
  \idzeroone \right) \right)$.
\end{definition}

With a small modification, this process will also work for $f : \mathbbm{Q}
\arr \mathbbm{R}$. In this case $\tmop{\mathsf{map}} f$ has type $\SF
\mathbbm{Q} \Rightarrow \SF \mathbbm{R}$, Fortunately, there is an injection
$\tmop{\mathsf{dist}} : \SF \mathbbm{R} \Rightarrow \BF \mathbbm{Q}$, that
interprets a step function of real values as a bounded function (see
Definition~\ref{StepDist}). We can prove that the composition
$\tmop{\mathsf{dist}} \circ ( \tmop{\mathsf{map}}_{\SF} f) : \SF^{\infty}
\mathbbm{Q} \arr \BF \mathbbm{Q}$ is {\uc}. Then, proceeding in a similar
fashion, this can be lifted with {\fastBind} and applied to {\idzeroone} to
yield $f$ restricted to $[0,1]$ as a bounded function, which can then be
integrated.

\begin{definition}
  $\int_{[0,1]} f \assign \int \left(
  \tmop{\mathsf{map}}_{\C} \iota \left( \fastBind_{\C}  \left(
  \tmop{\mathsf{dist}} \circ ( \tmop{\mathsf{map}}_{\SF} f) \right)
  \idzeroone \right) \right)$.
\end{definition}

An arbitrary uniformly continuous function $f:\mathbbm{R} \arr \mathbbm{R}$
can be integrated on $[0,1]$ by integrating $f \circ
\tmop{\mathsf{return}}_{\C} : \mathbbm{Q} \arr \mathbbm{R}$ because the
Riemann integral only depends on the value of functions at rational points.

\subsection{Stieltjes integral}\label{ss:stieltjes}Given the previous
presentation, any bounded function could be used in place of {\idzeroone}. A
natural question arises: what happens when {\idzeroone} is replaced by another
bounded function, $g : \BF \mathbbm{Q}$? An analysis shows that the result is
the Stieltjes integral with respect to $g^{\um 1}$, when $g$ is non-decreasing.

\begin{definition}
  $\int f \mathd g^{\um 1} \assign \int \left(
  \tmop{\mathsf{map}}_{\C} \iota \left( \fastBind_{\C}  \left(
  \tmop{\mathsf{dist}} \circ \left( \tmop{\mathsf{map}}_{\SF} f \right)
  \right) g \right) \right)$.
\end{definition}

We never intended to develop the Stieltjes integral; however, it practically
falls out of our work for free. This is not quite as general as the Stieltjes
integral for three reasons. Because $g$ is defined on $[0,1]$, this means that
$g^{\um 1}$'s range must go from 0 to 1. Essentially, $g^{\um 1}$ must be a
cumulative distribution function and, hence, $g$ is a quantile
function. Secondly, because $g$ is a bounded function, $g^{\um 1}$ must have
compact support (meaning $g^{\um 1}$ must be 0 to the left of its support and
1 to the right of its support). Thirdly, our bounded functions can only have
discontinuities at rational points.

We have tried to allow $g$ to be an arbitrary integrable function (this would
remove some of the previous restrictions); however, we have been unable to
constructively show that $\tmop{\mathsf{dist}} \circ (
\tmop{\mathsf{map}}_{\SF} f) : \SF^1 \mathbbm{Q} \Rightarrow \IF \mathbbm{Q}$
is uniformly continuous when $f$ is. We have generated counterexamples where
$f$ is uniformly continuous with modulus $\mu$ and $\tmop{\mathsf{dist}} \circ
( \tmop{\mathsf{map}}_{\SF} f)$ is {\tmem{not}} uniformly continuous with
modulus $\mu$; however, for our particular counterexamples,
$\tmop{\mathsf{dist}} \circ ( \tmop{\mathsf{map}}_{\SF} f)$ is still uniformly
continuous with a different modulus.

Still, our integral should allow one to integrate with respect to some
interesting distributions such as the Dirac distribution
and the Cantor distribution.

\subsection{Distributing monads}\label{ss:dist-monad}The function
$\tmop{\mathsf{dist}} : \SF \mathbbm{R} \Rightarrow \BF \mathbbm{Q}$ combines two monads on metric
spaces, {\C} and
$\SF$. The function $\tmop{\mathsf{dist}}$ has type $\SF ( \C \mathbbm{Q})
\Rightarrow \C ( \SF \mathbbm{Q})$. In general, the composition of two monads
$\mathfrak{M} \circ \mathfrak{N}$ forms a monad when there is a distribution
function $\tmop{\mathsf{dist}} : \mathfrak{N}(\mathfrak{M}X) \rightarrow
\mathfrak{M}(\mathfrak{N}X)$ satisfying certain laws~{\cite{Beck:1969,TTT}}.
Below we state the laws in a more familiar function style~{\cite{Jones:1993}}:
{\footnote{For the {\SF} and {\C} monads, we formally checked all of these
rules apart from the last one which was too tedious; however, the correctness
of the integral does not depend on the proofs of these laws.}}
\begin{eqnarray}
  \tmop{\mathsf{dist}} \circ \tmop{\mathsf{map}}_{\mathfrak{N}}  \left(
  \tmop{\mathsf{map}}_{\mathfrak{M}} f \right) & \asymp &
  \tmop{\mathsf{map}}_{\mathfrak{M}} \left( \tmop{\mathsf{map}}_{\mathfrak{N}}
  f \right) \circ \tmop{\mathsf{dist}} \nonumber\\
  \tmop{\mathsf{dist}} \circ \tmop{\mathsf{return}}_{\mathfrak{N}} & \asymp &
  \tmop{\mathsf{map}}_{\mathfrak{M}}  \tmop{\mathsf{return}}_{\mathfrak{N}}
  \nonumber\\
  \tmop{\mathsf{dist}} \circ \tmop{\mathsf{map}}_{\mathfrak{N}}
  \tmop{\mathsf{return}}_{\mathfrak{M}} & \asymp &
  \tmop{\mathsf{return}}_{\mathfrak{M}} \nonumber\\
  \tmop{\mathsf{prod}} \circ \tmop{\mathsf{map}}_{\mathfrak{N}}
  \tmop{\mathsf{dorp}} & \asymp & \tmop{\mathsf{dorp}} \circ
  \tmop{\mathsf{prod}} \nonumber
\end{eqnarray}
where
\begin{eqnarray}
  \tmop{\mathsf{prod}} & \defEq & \tmop{\mathsf{map}}_{\mathfrak{M}}
  \tmop{\mathsf{join}}_{\mathfrak{N}} \circ \tmop{\mathsf{dist}} \nonumber\\
  \tmop{\mathsf{dorp}} & \defEq & \tmop{\mathsf{join}}_{\mathfrak{M}} \circ
  \tmop{\mathsf{map}}_{\mathfrak{M}}  \tmop{\mathsf{dist}} . \nonumber
\end{eqnarray}
\begin{definition}
  \label{StepDist}In our case, the distribution function is defined as
  \begin{eqnarray*}
    \tmop{\mathsf{dist}} & : & \SF^{\infty} \left( \C X \right)  \arr \C
    \left( \SF^{\infty} X \right)\\
    \tmop{\mathsf{dist}} f & \defEq & \lambda \varepsilon .
    \tmop{\mathsf{map}}_{\SF^{\infty}} \left( \lambda x.x \varepsilon \right)
    f.
  \end{eqnarray*}
\end{definition}

The function $\tmop{\mathsf{dist}}$ maps a step function $f$ with values in
the completion of $X$ to a collection of approximations $f_{\varepsilon} :
\SF^{\infty} X$ to the function $f$ such that for all $\varepsilon$ in
$\mathbbm{Q}^{\upl}$, $|f - f_{\varepsilon} | \leq \varepsilon$ ``pointwise''.

\section{Implementation in Coq}

In this section, we treat aspects related to our implementation in Coq.

\subsection{Formalization in Coq}\label{ss:FIC}Formalizing the previous in Coq
is done in a straightforward manner. We interpret {\Prop} as
\tmtexttt{Prop}, the universe of propositions. Thus, for example, the ball
relation on rational numbers has type \tmtexttt{Qball : Qpos -> Q -> Q ->
Prop}.

The metric space structure is packaged up as a dependent record, a
$\Sigma$-type. This record contains a field for the domain of the metric
space, which is a setoid, a ball relation over that domain with a proof that
the ball relation respects the equivalence relation of the domain. Lastly the
record contains a collection of proofs of the five axioms of a metric space
(see Section~\ref{ss:MS}) which are themselves packed into their own record
type.

\begin{figure}
\begin{verbatim}
Record is_MetricSpace (X:Setoid)(B: Qpos -> relation X):Prop :=
{ msp_refl: forall e, reflexive _ (B e)
; msp_sym: forall e, symmetric _ (B e)
; msp_triangle: forall e1 e2 a b c, B e1 a b -> B e2 b c ->
                B (e1 + e2)%Qpos a c
; msp_closed: forall e a b,(forall d, B(e+d)%Qpos a b)->B e a b
; msp_eq: forall a b, (forall e, B e a b) -> st_eq a b
}.

Record MetricSpace : Type :=
{ msp_is_setoid :> Setoid
; ball : Qpos -> msp_is_setoid -> msp_is_setoid -> Prop
; ball_wd : forall (e1 e2:Qpos), (QposEq e1 e2) ->
            forall x1 x2, (st_eq x1 x2) ->
            forall y1 y2, (st_eq y1 y2) ->
            (ball e1 x1 y1 <-> ball e2 x2 y2)
; msp : is_MetricSpace msp_is_setoid ball
}.
\end{verbatim}
\caption{The formal definition of a metric space as a dependent record.}
\end{figure}

The completion monad is a function from the record type of metric spaces to
the record type of metric spaces. In Section~\ref{ss:completion-monad} the
domain of the completion is given with an existential quantifier. We use
Coq's \tmtexttt{Set} based existential quantifier (essentially a
$\Sigma$-type) to implement this quantifier.

As a rule, we use \tmtexttt{Prop} based objects only for types that would
(extensionally) have at most one value, these are essentially the Harrop
formulas~{\cite{lcf:spi:03}}. Thus negative types such as function
types/implications whose result type is $\bot$ or $\top$ go into the
\tmtexttt{Prop} universe, and all other types are put into the \tmtexttt{Set}
or \tmtexttt{Type} universes. We chose to have the ball relation return
\tmtexttt{Prop} because the closed sets are typically negative predicates.

Step functions are represented by an inductive data type which is effectively
a labeled binary tree. The Coq declaration for this structure is the
following:
\begin{verbatim}
Inductive StepF : Type:=
|constStepF : X -> StepF
|glue : OpenUnit -> StepF -> StepF -> StepF
\end{verbatim}

Eventually we defined the intended equivalence relation on step functions (see
Section~\ref{ss:SFasAF}) as a binary predicate, but first we define the split
(Section~\ref{glue-split}) and basic applicative functor functions. For
example, \tmtexttt{Ap} is defined as:
\begin{verbatim}
Fixpoint Ap (X Y:Type)(f:StepF (X->Y))(a:StepF X):StepF Y :=
match f with
|constStepF f0 => Map f0 a
|glue o f0 f1=>let (l,r):=Split a o in (glue o(Ap f0 l)(Ap f1 r))
end.
\end{verbatim}

We created proofs of the various laws and relationships between our
definitions. This cumulates with an ultimate proof that our definition of
integration coincides with a previous reference implementation from the CoRN
library~{\cite{lcf:03}}:
\begin{figure}
\begin{verbatim}
Lemma Integrate01_correct : forall F (H01:Zero[<=](One:IR))
 (HF:Continuous_I H01 F) (f:Q_as_MetricSpace --> CR),
 (forall (o:Q) H, (0 <= o <= 1)->
 (f o == IRasCR (F (inj_Q IR o) H)))%CR ->
 (IRasCR (integral Zero One H01 F HF)==Integrate01 f)%CR.
\end{verbatim}\caption{The theorem stating that our definition of integral is correct.}
\end{figure}

Loosely speaking this says ``for any function \tmtexttt{F} over CoRN's real
number which is continuous on $[0,1]$ and for any function \tmtexttt{f} from
the rationals to our real numbers that agrees with \tmtexttt{F} for rational
inputs between 0 and 1, then CoRN's integral of \tmtexttt{F} over $[0,1]$ is
equivalent to our integral of \tmtexttt{f}. The proof of this lemma is 300
lines long and mostly consists of translating facts about the fast
implementation of the reals to the C-CoRN library and vice versa. The actual
proof is quite general because it only uses certain general properties of the
integral, such as linearity and monotonicity.

As a by-product of our development, we can also compute the supremum of any
uniformly continuous function on $[0,1]$.

This has been a small glimpse into our Coq development. For full details
there is no better source than the source; see $\langle$\tmtexttt{http://c-corn.cs.ru.nl}$\rangle$.

\subsection{Glue and split}\label{glue-split}As discussed in
Section~\ref{ss:FIC}, step functions are an inductive structure defined by two
constructors. One constructor creates constant step
functions, and the other constructor, \tmtexttt{glue}, squeezes two step functions
together, joining them together at a given point $o : \ou$. One of the first
operations we defined on step functions (after defining
\tmtexttt{fold}) was \tmtexttt{Split}, which is like the opposite of
\tmtexttt{glue}. Recall from Section~\ref{ss:InductiveDef} that, given a step
function $f$ and a point $a : \ou$, \tmtexttt{Split} splits $f$ into two
pieces at $a$. The functions \tmtexttt{SplitL} and \tmtexttt{SplitR} return the left step function
and
the right step function respectively. Table~\ref{CoqStepFunction} lists the
association between our mathematical notation and the concrete syntax used in
Coq.

\begin{table}[h]
  \begin{center}
  \begin{tabular}{|c|c|}
    \hline
    {\tmstrong{Mathematical Notation}} & {\tmstrong{Coq Syntax}}\\
    \hline
    {\pure{x}} & \tmtexttt{constStepF x}\\
    \hline
    {\glue{f}{o}{g}} & \tmtexttt{glue o f g}\\
    \hline
    {\splitl{f}{a}} & \tmtexttt{SplitL f a}\\
    \hline
    {\splitr{f}{a}} & \tmtexttt{SplitR f a}\\
    \hline
    $( \splitl{f}{a}, \splitr{f}{a})$ & \tmtexttt{Split f a}\\
    \hline
  \end{tabular}\end{center}
  \caption{The concrete syntax used in Coq for our step function notation.\label{CoqStepFunction}}
\end{table}

The key to reasoning about \tmtexttt{Split} was to prove the
\tmtexttt{Split-Split} lemmas:
\begin{eqnarray*}
  ab = c & \Rightarrow & \splitl{\splitl{f}{a}}{b} \asymp \splitl{f}{c}\\
  a + b - ab = c & \Rightarrow & \splitr{\splitr{f}{a}}{b} \asymp
  \splitr{f}{c}\\
  a + b - ab = c \Rightarrow dc = a & \Rightarrow &
  \splitl{\splitr{f)}{(a}}{b} \asymp \splitr{\splitl{(f}{c)}}{d}
\end{eqnarray*}
This collection of lemmas shows how the splits combine and distribute over
each other. With sufficient case analysis, one can prove the above lemmas.
These lemmas, combined with a few other useful lemmas (such as
\tmtexttt{Split-Map} lemmas) provided enough support to prove the laws for
applicative functors without difficulty.

\subsection{Equivalence of step functions}\label{ss:SFEquiv}The work in the
previous section defined an applicative functor of step functions over any
type $X$. From this point on, we will require that $X$ be a setoid (see
Section~\ref{ss:SFasAF}). In order to help facilitate this, in our development
we define new functions, \tmtexttt{constStepF}, \tmtexttt{glue},
\tmtexttt{Split}, etc., that operate on step functions of setoids rather than
step functions of types. These functions are definitionally equal to the
previous functions, but their types now carry the setoid relation from their
argument types to their result types. These new function names shadow the old
function names, and the lemmas about them need to be repeated; however, their
proofs are trivial by using previous proofs.

Perhaps the biggest challenge we encountered in our formalization was to prove
that lifting setoid equivalence to step functions
(Section~\ref{ss:applicative}) is indeed an equivalence relation{\emdash}in
particular showing that it is transitive. We eventually succeeded after
creating some lemmas about the interaction between the equivalence relation
and \tmtexttt{Split}, etc.

\subsection{Common partitions}When reasoning about two (or more) step
functions, it is common to split up one of the step functions so that it
shares the same partition structure as the other step function. This allows
one to do induction over two step functions and have both step functions
decompose the same way. Eventually, we abstracted this pattern of reasoning
into an induction-like principle.
{\scriptsize
\begin{tabular}[t]{llccl}
\tmtexttt{Lemma StepF\_ind2}:\\
$\forall X Y. \forall \Psi : X \Rightarrow Y \Rightarrow \Prop .$ &  &  &  &\\
$(\forall s_0 s_1 t_0 t_1 : \SF X. s_0 \asymp s_1 \Rightarrow t_0 \asymp t_1
\Rightarrow \Psi s_0 t_0 \Rightarrow \Psi s_1 t_1$ &  &  &  & $)
\Rightarrow$\\
$(\forall x : X. \forall y : Y.$ & $\Psi$ & $\pure{x}$ & $\pure{y}$ & $)
\Rightarrow$\\
$(\forall o : \ou . \forall s_l s_r : \SF X. \forall t_l t_r : \SF Y. \Psi
s_l t_l \Rightarrow \Psi s_r t_r \Rightarrow$ & $\Psi$ & $
\glue{s_l}{o}{s_r}$ & $\glue{t_l}{o}{t_r}$ & $)\Rightarrow$\\
$\forall s : \SF X. \forall t : \SF Y.$ & $\Psi$ & $s$ &  $t$ &
\end{tabular}}

This lemma may look complex, but it is as easy to use in Coq as an induction
principle for an inductive family. Normally one would reason about two step
functions by assuming, without loss of generality, that they have a common
partition, then doing induction over that partition. Our lemma above combines
these two steps into one. In one step, one does induction as if the two
functions have a common partition. This lemma was inspired by McBride and
McKinna's work on views in dependent type theory~{\cite{mcbride:2004}}. It
allows one to ``view'' two step functions as having a common partition.

The lemma is used by applying it to a goal of the form \tmtexttt{forall (s t :
StepF X), {\tmem{<expr>}}}, which can be created by generalizing two step
functions. There are only two cases to consider. One case is when \tmtexttt{s}
and \tmtexttt{t} are both constant step functions. The other case is when
\tmtexttt{s} and \tmtexttt{t} are each glued together from two step functions
{\tmem{at the same point}}. There is, however, a side condition to be proved.
One has to show that \tmtexttt{{\tmem{<expr>}}} respects the equivalence
relation on step functions for \tmtexttt{s} and \tmtexttt{t}. Fortunately,
\tmtexttt{{\tmem{<expr>}}} is typically constructed from respectful functions,
and proving this side condition is easy.

For example, we used this lemma in the proof that
$\tmop{\mathsf{fold}}_{\tmop{affine}}$ is additive.

\begin{theorem}
  For all step functions $f, g : \SF \mathbbm{Q}$,
  \[\tmop{\mathsf{fold}}_{\tmop{affine}} f +
  \tmop{\mathsf{fold}}_{\tmop{affine}} g =
  \tmop{\mathsf{fold}}_{\tmop{affine}} (f \maptwo{\op{+}} g)\]
\end{theorem}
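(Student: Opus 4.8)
The plan is to prove the identity by the two-step-functions induction principle \tmtexttt{StepF\_ind2} stated just above, which lets us reason about $f$ and $g$ as though they had a common partition. Put
\[\Psi\,f\,g \defEq \bigl(\tmop{\mathsf{fold}}_{\tmop{affine}} f + \tmop{\mathsf{fold}}_{\tmop{affine}} g = \tmop{\mathsf{fold}}_{\tmop{affine}} (f \maptwo{\op{+}} g)\bigr).\]
Applying \tmtexttt{StepF\_ind2} to the goal $\forall f\,g.\ \Psi\,f\,g$ leaves three obligations: the constant/constant base case, the glue/glue step case (with both functions glued at the \emph{same} point $o$), and the side condition that $\Psi$ respects $\asymp_{\SF \mathbbm{Q}}$ in each argument.

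For the base case, $f = \pure{a}$ and $g = \pure{b}$. By the defining equation of $\app$ on constants (equivalently, the homomorphism law) we get $\pure{a} \maptwo{\op{+}} \pure{b} \asymp \widehat{a + b}$, and since $\tmop{\mathsf{fold}}_{\tmop{affine}} \pure{x} = x$ both sides reduce to $a + b$.

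For the step case, let $f = \glue{f_l}{o}{f_r}$ and $g = \glue{g_l}{o}{g_r}$. First I would compute, directly from the defining equations of $\tmop{\mathsf{map}}$ and $\app$ on step functions, that
\[(\glue{f_l}{o}{f_r}) \maptwo{\op{+}} (\glue{g_l}{o}{g_r}) \asymp \glue{(f_l \maptwo{\op{+}} g_l)}{o}{(f_r \maptwo{\op{+}} g_r)};\]
the only simplification needed is that splitting a glued function exactly at its own node is trivial, i.e.\ $\splitl{(\glue{g_l}{o}{g_r})}{o} \asymp g_l$ and $\splitr{(\glue{g_l}{o}{g_r})}{o} \asymp g_r$ (the $a = o$ clauses of the definitions of split). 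Then, unfolding $\tmop{\mathsf{fold}}_{\tmop{affine}}$ on a $\glue{}{}{}$, which contributes the affine combination $o\,x + (1-o)\,y$, applying the two induction hypotheses to the $l$- and $r$-subtrees, and regrouping the six resulting summands by rational arithmetic yields
\[o\bigl(\tmop{\mathsf{fold}}_{\tmop{affine}} f_l + \tmop{\mathsf{fold}}_{\tmop{affine}} g_l\bigr) + (1-o)\bigl(\tmop{\mathsf{fold}}_{\tmop{affine}} f_r + \tmop{\mathsf{fold}}_{\tmop{affine}} g_r\bigr) = \tmop{\mathsf{fold}}_{\tmop{affine}} f + \tmop{\mathsf{fold}}_{\tmop{affine}} g.\]

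The $\maptwo{\op{+}}$-unfolding and the final regrouping are routine. The one part that needs a little care is the side condition of \tmtexttt{StepF\_ind2}: I would discharge it using the fact — already required to make $\int$ well-defined in Definition~\ref{IntegralQ} — that $\tmop{\mathsf{fold}}_{\tmop{affine}}$ is a morphism for $\asymp_{\SF \mathbbm{Q}}$, together with $\maptwo{\op{+}}$ being built from the respectful operations $\Map$ and $\app$ via $\tmop{\mathsf{map} 2}$. Hence $\Psi$ is a composite of respectful maps landing in an equality of rationals, so it is itself respectful; this is exactly the kind of side condition the paper notes is "typically easy" whenever \tmtexttt{StepF\_ind2} is applied.
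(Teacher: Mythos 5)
Your proposal is correct and follows essentially the same route as the paper: apply \tmtexttt{StepF\_ind2} to the predicate $\Psi$, discharge the respectfulness side condition via the respectfulness of $\tmop{\mathsf{fold}}_{\tmop{affine}}$ and the lifted addition, and handle the constant/constant and glue/glue cases by evaluation plus the induction hypotheses. The extra detail you give on unfolding $\maptwo{\op{+}}$ over glues at the same point is just a spelled-out version of what the paper compresses into ``after evaluation.''
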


\begin{proof}
  The predicate $\lambda f g. \text{$\tmop{\mathsf{fold}}_{\tmop{affine}} f +
  \tmop{\mathsf{fold}}_{\tmop{affine}} g \asymp
  \tmop{\mathsf{fold}}_{\tmop{affine}} (f \maptwo{\op{+}} g)$}$ is a
  respectful predicate because $\tmop{\mathsf{fold}}_{\tmop{affine}}$ and
  addition are respectful functions. Therefore, we can apply
  \tmtexttt{StepF\_ind2}. There are only two cases to consider.
  
  The first case is when $f = \pure{x}$ and $g = \pure{y}$. In this case, the
  problem reduces to $x + y = x + y$ after evaluating
  $\tmop{\mathsf{fold}}_{\tmop{affine}}$ and $\pure{x} \maptwo{\op{+}}
  \pure{y}$.
  
  The second case is when $f = \glue{f_l}{o}{f_r}$ and $g =
  \glue{g_l}{o}{g_r}$. In this case, the problem reduces to
  \begin{eqnarray*}
    & o ( \tmop{\mathsf{fold}}_{\tmop{affine}} f_l +
    \tmop{\mathsf{fold}}_{\tmop{affine}} g_l) + (1 - o) (
    \tmop{\mathsf{fold}}_{\tmop{affine}} f_r +
    \tmop{\mathsf{fold}}_{\tmop{affine}} g_r) & \\
    & = & \\
    & o ( \tmop{\mathsf{fold}}_{\tmop{affine}} (f_l \maptwo{\op{+}} g_l) + (1
    - o) ( \tmop{\mathsf{fold}}_{\tmop{affine}} (f_r \maptwo{\op{+}} g_r)) &
  \end{eqnarray*}
  after evaluation. This then follows from the inductive hypothesis.
\end{proof}

This induction lemma was also very useful for proving the combinator equations
in Section~\ref{ss:Combinator}.

The proof of \tmtexttt{StepF\_ind2} is not very difficult.

\begin{proof}
  Suppose $\Psi$ is a respectful binary predicate on step functions. Suppose
  it also satisfies the two other hypothesis of the lemma. We need to show
  $\forall s t, \Psi s t$. We proceed first by induction on $s$.
  
  Consider the case when $s = \pure{x}$. Now we do induction on $t$. Consider
  the case when $t = \pure{y}$. This is exactly the situation of our first
  hypothesis, so we are done. Consider the case when $t = \glue{t_l}{o}{t_r}$.
  We need to prove $\Psi \pure{x} ( \glue{t_l}{o}{t_r})$ assuming that $\Psi
  \pure{x} t_l$ and $\Psi \pure{x} t_r$ both hold. We know that
  $\glue{\splitl{( \pure{x}}{o)}}{o}{\splitr{\pure{x})}{(o} \asymp \pure{x}}$
  holds, and because $\Psi$ is respectful we can replace $\pure{x}$ using this
  equivalence. Also $\splitl{\pure{x}}{o}$ and $\splitr{\pure{x}}{o}$ both
  reduce to $\pure{x}$ by evaluation. This leaves us with needing to show
  $\Psi ( \glue{\pure{x}}{o}{\pure{x}}) ( \glue{t_l}{o}{t_r})$. This follows
  from our second hypothesis and our two inductive hypotheses.
  
  Now consider the case when $s = \glue{s_l}{o}{s_r}$. We need to prove
  $\forall t. \Psi (s_l \glue{s_l}{o}{s_r}) t$ assuming that $\forall t. \Psi
  s_l t$ and $\forall t. \Psi s_r t$. Again, we know that
  $\glue{\splitl{(t}{o)}}{o}{\splitr{t)}{(o} \asymp t}$ holds, and because
  $\Psi$ is respectful we can replace $t$ using this equivalence. The proof
  proceeds similar to before.
\end{proof}

\subsection{Combinators}\label{ss:Combinator}The
combinators {\comp} and {\id} are preserved by every
applicative functor (see Section~\ref{ss:applicative}). For the applicative
functor $\SF$, all lambda expressions are preserved. To show this, it is
sufficient to show that each of the {\comp}{\flip}{\const}{\diag} combinators
are preserved. These are the combinators defined by:
\begin{itemize}
  \item $\comp f g x \assign f (g x)$ (compose)
  \item $\flip f x y \assign f y x$ (interchange)
  \item $\id x \assign x$ (identity)
  \item $\const x y \assign x$ (discard)
  \item $\diag f x \assign f x x$ (duplicate)
\end{itemize}
The identity combinator is redundant because $\id \asymp \diag \const$, but it
is still useful.

All lambda expressions can be rewritten in a ``point free'' form using these
combinators. Using combinators allows us to reason about the lambda calculus
without worrying about binders, which are notoriously difficult to do by hand.
In fact, it is one of the main issues in the POPLmark
challenge~{\cite{aydemir05mechanized}}.

\begin{theorem}
  The combinators, {\flip}{\const}{\diag}, are preserved by the
  {\SF} monad.
  \begin{eqnarray*}
    \flip  \Map f \app x \app y & \asymp_{\SF X} & f \app y \app x\\
    \const  \Map x \app y & \asymp_{\SF X} & x\\
    \diag  \Map f \app x & \asymp_{\SF X} & f \app x \app x
  \end{eqnarray*}
\end{theorem}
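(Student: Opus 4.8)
\textit{Proof sketch.} The plan is to prove the three identities separately, each by the common-partition induction principle \tmtexttt{StepF\_ind2} of the previous subsection (for the $\flip$ identity, by its evident ternary analogue, say \tmtexttt{StepF\_ind3}, whose proof is identical to that of \tmtexttt{StepF\_ind2}). It is worth noting first \emph{why} a bare appeal to the four applicative-functor laws cannot work: these equations are genuinely special to $\SF$ — they already fail for, e.g., the list applicative functor — because $\app$ on $\SF$ acts ``pointwise'' like composition in the reader monad, so one really has to exploit the inductive structure of step functions.

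Fix one of the identities. The side condition of \tmtexttt{StepF\_ind2} asks that the predicate $\Psi$ asserting ``left-hand side $\asymp$ right-hand side'' be respectful; this is immediate, since $\Map$, $\app$, $\flip$, $\const$, $\diag$ are all morphisms, so both sides of each equation are assembled from respectful operations and we may rewrite along $\asymp$. For the $\const$ and $\diag$ equations $\Psi$ is binary (in $x,y$, resp.\ in $f,x$), so \tmtexttt{StepF\_ind2} applies directly and leaves exactly two goals. In the constant goal ($x = \pure{a}$, $y = \pure{b}$, resp.\ $f = \pure{g}$, $x = \pure{a}$) we use the defining equation $h \Map \pure{a} = \pure{h\,a}$ and the homomorphism law $\pure{h} \app \pure{a} \asymp \pure{h\,a}$; after unfolding the combinator, both sides collapse to the same point ($\pure{a}$, resp.\ $\pure{g\,a\,a}$). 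In the glue goal ($x = \glue{x_l}{o}{x_r}$, $y = \glue{y_l}{o}{y_r}$, and likewise for $f$) we push $\Map$ through the glue by $h \Map \glue{u_l}{o}{u_r} = \glue{(h \Map u_l)}{o}{(h \Map u_r)}$ and push $\app$ through by $\glue{(u_l}{o}{u_r)} \app v \defEq \glue{(u_l \app \splitl{v}{o})}{o}{(u_r \app \splitr{v}{o})}$; the key simplification is the $a = o$ clause of the split definitions, i.e.\ splitting a glue at \emph{its own} point returns its two pieces, $\splitl{(\glue{v_l}{o}{v_r})}{o} = v_l$ and $\splitr{(\glue{v_l}{o}{v_r})}{o} = v_r$. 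After these reductions both sides become a \tmtexttt{glue} at $o$ of componentwise expressions matched by the two inductive hypotheses (for $\diag$ one also unfolds the right-hand side $f \app x \app x$ the same way).

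The $\flip$ equation runs along identical lines but involves three step functions $f,x,y$, which is precisely why it needs the ternary principle: in the glue goal one wants $f,x,y$ \emph{all} decomposed at a common point $o$, and only then do the equations for $\Map$, $\app$, and split realign everything componentwise so the inductive hypotheses close the goal. \textbf{The main obstacle is this $\flip$ case.} With \tmtexttt{StepF\_ind3} in hand it is as mechanical as the other two; without it, one is forced to iterate \tmtexttt{StepF\_ind2} and realign partitions by hand with the \tmtexttt{Split-Split} and \tmtexttt{Split-Map} lemmas, which is the only bookkeeping-heavy part. A secondary, but entirely routine, point is discharging the respectfulness side conditions, which — as noted — follows because every operation in sight is a morphism.
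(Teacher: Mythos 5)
Your proposal is correct and follows essentially the route the paper itself indicates for these equations: the paper states the theorem without a written-out proof but attributes it to the common-partition induction principle \tmtexttt{StepF\_ind2} together with the glue/split computation rules and the \tmtexttt{Split}-lemmas, which is exactly your argument (constant case by the homomorphism/\tmtexttt{Map}-on-\tmtexttt{pure} equations, glue case by splitting a glue at its own point and invoking the componentwise inductive hypotheses, with respectfulness discharged since all operations are morphisms). Your handling of $\flip$ via a ternary analogue (or iterated \tmtexttt{StepF\_ind2} with the \tmtexttt{Split-Split}/\tmtexttt{Split-Map} lemmas) is a reasonable filling-in of a detail the paper leaves implicit, and it goes through.
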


This means that we can lift any function definable with the $\lambda$-calculus
to step functions.

\subsection{Lifting theorems}\label{ss:liftingTheorems}During our development,
we often needed to prove statements like the transitivity of the order
relation on the step functions:
\[ \forall f g h : \SF \Q . f \foldmaptwo{\leq_{\mathbbm{Q}}} g \Rightarrow g
   \foldmaptwo{\leq_{\mathbbm{Q}}} h \Rightarrow f
   \foldmaptwo{\leq_{\mathbbm{Q}}} h \]
We would like to deduce this statement from the transitivity of the
corresponding pointwise relation:
\[ \forall x y z : \mathbbm{Q}. x \leq_{\mathbbm{Q}} y \Rightarrow y
   \leq_{\mathbbm{Q}} z \Rightarrow x \leq_{\mathbbm{Q}} z \]
First, we use a lemma that lifts universal statements about an arbitrary
predicate $R : X \Rightarrow Y \Rightarrow Z \Rightarrow \Prop$ to a universal
statement about step functions:
\[ (\forall x : X. \forall y : Y. \forall z : Z. R x y z) \Rightarrow \forall
   f : \SF X. \forall g : \SF Y. \forall h : \SF Z.
   \tmop{\mathsf{fold}}_{\Prop} (R \Map f \app g \app h) \]
This yields
\[ \forall f g h : \SF \mathbbm{Q}. \tmop{\mathsf{fold}}_{\Prop} ((\lambda x y
   z. x \leq_{\mathbbm{Q}} y \Rightarrow y \leq_{\mathbbm{Q}} z \Rightarrow x
   \leq_{\mathbbm{Q}} z) \Map f \app g \app h) . \]
Next, we would like to ``evaluate'' the lambda expression as ``applied'' to
the step functions $f$, $g$, and $h$. Because $f$, $g$, and $h$ are variables,
we need to symbolically evaluate the expression. We avoid dealing with binders
by converting the lambda expression into the combinator expression
\[ \ess ( \comp \ess ( \comp ( \comp ( \comp \comp ( \op{\Rightarrow}))) (
   \op{\leq_{\mathbbm{Q}}}))) ( \comp ( \flip ( \comp \ess ( \comp ( \comp (
   \op{\Rightarrow})) ( \op{\leq_{\mathbbm{Q}}})))) (
   \op{\leq_{\mathbbm{Q}}})) \Map f \app g \app h \text{,} \]
where $\ess \assign \comp ( \comp ( \comp \diag) \flip) ( \comp \comp)$ and $(
\op{\Rightarrow})$ and $( \op{\leq_{\mathbbm{Q}}})$ are prefix versions of
these infix functions. This substitution is sound because the combinator term
and lambda expression can easily be shown to be extensionally equivalent (by
normalization), and $\tmop{\mathsf{map}}$ and $\tmop{\mathsf{ap}}$ are
well-defined with respect to extensional equality.

We found the required combinator form by using
lambdabot~{\cite{lambdabot}}, a standard tool for
Haskell programmers. It would have been interesting to implement the algorithm
for finding the combinator form of a lambda term in Coq; however, this was not
the aim of our current research.

Now that the lambda term is expressed in combinator form, we can repeatedly
apply the combinator equations from Section~\ref{ss:applicative} and
Section~\ref{ss:Combinator}. These equations are exactly the rules of
``evaluation'' of this expression ``applied'' to step functions. We put these
equations into a database of rewrite rules and used Coq's
\tmtexttt{autorewrite} system as part of a small custom tactic to
automatically reduce this entire expression in one command, yielding
\[ \forall f g h : \SF \mathbbm{Q}. \tmop{\mathsf{fold}}_{\Prop} (f
   \maptwo{\leq_{\mathbbm{Q}}} g \maptwo{\Rightarrow} g
   \maptwo{\leq_{\mathbbm{Q}}} h \maptwo{\Rightarrow} f
   \maptwo{\leq_{\mathbbm{Q}}} h) . \]
Finally, we push the $\tmop{\mathsf{fold}}_{\Prop}$ inside. To do so, we have
proved a lemma which allows us to distribute implication over
$\tmop{\mathsf{fold}}_{\Prop}$:
\[ \forall P Q : \SF ( \Prop) . ( \tmop{\mathsf{fold}}_{\Prop} (P
   \maptwo{\op{\Rightarrow}} Q)) \Rightarrow \tmop{\mathsf{fold}}_{\Prop} P
   \Rightarrow \tmop{\mathsf{fold}}_{\Prop} Q \]
Repeated application of this lemma yields
\[ \forall f g h : \SF \Q . f \foldmaptwo{\leq_{\mathbbm{Q}}} g \Rightarrow g
   \foldmaptwo{\leq_{\mathbbm{Q}}} h \Rightarrow f \foldmaptwo{\leq_Q} h \]
as required.

\subsection{The Identity Bounded Function}\label{ss:id01}

In order to integrate uniformly continuous functions, we compose them with the
identity bounded function to create a bounded function that can be integrated
(see Section~\ref{ss:riemann}). This requires defining the identity bounded
function on $[0, 1]$.

The bounded functions are the completion of step functions under the
$L^{\infty}$ metric. To create a bounded function, we need to generate a step
function within $\varepsilon$ of the identity function for every $\varepsilon
: \mathbbm{Q}^+$. The number of steps used in the approximation will determine
the number of samples of the continuous function $f$ that will be used. For
efficiency, we want the approximation to have the fewest number of steps
possible. Therefore, we defined a function
$\text{\tmtexttt{stepSample}} :
\text{\tmtexttt{positive}} \Rightarrow \SF \mathbbm{Q}$, where
\tmtexttt{positive} is the binary positive natural numbers, such that
$\text{\tmtexttt{stepSample}} n$ produces the best approximation of the
identity function with $n$ steps.

It is unfortunate that the width of each step is computed during integration,
because we know that the result will always be equivalent to $\frac{1}{n}$ for
these particular step functions. Perhaps some other data structure for step
functions could be used that explicitly stores the length of each step.
However, the time spent computing the length of the interval is usually much
smaller that the time it takes to sample the continuous function $f$.

\subsection{Timings}The version of Riemann integration that we implemented
applies to \tmtextit{general} continuous functions and hence has bad
complexity behavior. If we knew more about the function, for instance if it is
differentiable, faster algorithms could be used~{\cite{edalat:1999}}.

\begin{table}[h]
\begin{center}
  \begin{tabular}{|l|l|}
    \hline
    {\tmstrong{Function}} & {\tmstrong{Time}}\\
    \hline
    {\texttt{(answer 3 (Integrate01 Cunit))}} & 0.18s\\
    \hline
    {\texttt{(answer 2 (Integrate01 cos\_uc))}} & 0.52s\\
    \hline
    {\texttt{(answer 3 (Integrate01 cos\_uc))}} & 8.55s\\
    \hline
    {\texttt{(answer 3 (Integrate01 sin\_uc))}} & 7.48s\\
    \hline
  \end{tabular}
\end{center}
  \caption{{\texttt{Time Eval vm\_compute in ...}} carries out the reduction
  using Coq's virtual machine. The expression {\texttt{answer $n$}} asks for an
  answer to within $10^{- n}$. All computations where carried out on an IBM
  Thinkpad X41.}
\end{table}

When extracted to OCaml, the functions run approximately five times faster
when compiled and optimized.

\section{Future and related work}

Many optimizations are possible. Most time is spend on evaluating the function
at many points, as can be seen by comparing the timings for the
$\mathsf{\sin}$ function and the identity function ({\texttt{CUnit}}) which
have the same modulus of continuity and hence the same partition.

Some ways of speeding up the computation of these functions are discussed
in~{\cite{OConnor:real}}. Most notable are:
\begin{itemize}
  \item the use of dyadic rationals;
  
  \item the use of machine integers, (which will enter Coq in the near
  future);
  
  \item the use of forward propagation of errors instead of our a priori
  estimates of convergence~{\cite{BauerKavkler}};
  
  \item the use of parallelism. Our use of maps and folds makes it easy to run
  the algorithm in parallel. In fact, adding parallelism to the extacted
  O'Caml code by hand speeds up the evalutation by a factor three on a four
  processor machine. This only required making a single function,
  {\texttt{DistrComplete}} (a fold), be evaluated in parallel.
  
  We hope that the technology of parallel functional programming will
  included in Coq in the future.
\end{itemize}
Because of the way that we have defined uniform continuity, one modulus of
continuity applies to an entire function. Even for those parts of the domain
where the function changes slowly, we still must approximate the input to the
same precision that is needed for those parts where the function changes
quickly. This reduces performance somewhat for evaluation of these functions
(at the segments where the function changes slowly), but this causes
particularly bad performance for integration.

Because we only have a global modulus of continuity, we must use uniform
partitions when creating an integrable function from a {\uc} function. This
means that the function is sampled just as often where the function changes
slowly as where the function changes quickly. This uniform sampling can be
quite expensive for integration.

There is some potential to increase efficiency by using a ``non-uniform''
definition of uniform continuity. That is to say, using a definition of
uniform continuity that allows different segments of the domain to have local
moduli associated with them. Ulrich Berger uses such a definition of uniform
continuity to define integration~{\cite{berger:2009}}. Simpson also defines an
integration algorithm that uses a local modulus for a function that is
computed directly from the definition of the function~{\cite{simpson:1998}}.
However, implementing his algorithm directly in Coq is not possible because it
relies on bar induction, which is not available in Coq
unless one adds an axiom such as bar induction to it or one treats the real
numbers as a formal space~{\cite{Sambin:1987}}{\cite{Bauer}}.

The constructive real numbers have already been used to provide a
semi-decision procedure for inequalities of real numbers. Not only for the
constructive real numbers, but also for the non-computational real numbers in
the Coq standard library~{\cite{cekp8}}. The same technique can be applied
here.

Previously, the CoRN project~{\cite{corn}} showed that the formalization of
constructive analysis in a type theory is feasible. However, the extraction of
programs from such developments is difficult~{\cite{lcf:spi:03}}. On the
contrary, in the present article we have shown that if one takes an
algorithmic attitude from the start it {\tmem{is}} possible to obtain feasible
programs.

\section{Conclusions}

We have implemented Riemann integration in constructive mathematics based on
type theory. Type checking guarantees that the implementation meets its formal
specification. The use of the completion and the step function monads helped
to structure the program/proof, as did the use of applicative functors.

Building on the previous implementation of the completion of a metric
space~{\cite{OConnor:real}} and the library~{\cite{lcf:04}}, the current
implementation was completed in four man-months. The program/proof consists of
1155 lines of specifications, 3380 lines of proof, and 170,137 total
characters. The size of the gzipped tarball ({\texttt{gzip -9}}) of all the
source files is 37,039 bytes, which is an estimate of the information content.

Together with the work in~{\cite{OConnor:mscs,OConnor:real,OConnor:compact}},
the current project may be seen as the beginning of the realization of
Bishop's program to use constructive mathematics, based on type theory, as a
programming language for exact analysis.

\section{Acknowledgements}

We thank Cezary Kaliszyk for helping us to implement parallelism in OCaml.

\bibliographystyle{alpha}\bibliography{Riemann.bib,thesis.bib}

\newcommand{\etalchar}[1]{$^{#1}$}
\begin{thebibliography}{CFGW04}

\bibitem[ABF{\etalchar{+}}05]{aydemir05mechanized}
B.~Aydemir, A.~Bohannon, M.~Fairbairn, J.~Foster, B.~Pierce, P.~Sewell,
  D.~Vytiniotis, G.~Washburn, S.~Weirich, and S.~Zdancewic.
\newblock Mechanized metatheory for the masses: The {POPLmark} challenge.
\newblock In {\em Proceedings of the Eighteenth International Conference on
  Theorem Proving in Higher Order Logics (TPHOLs 2005)}, 2005.

\bibitem[Bau08]{Bauer}
Andrej Bauer.
\newblock Efficient computation with dedekind reals.
\newblock Extended abstract for CCA2008, 2008.

\bibitem[BC04]{BC04}
Yves Bertot and Pierre Cast\'eran.
\newblock {\em Interactive Theorem Proving and Program Development. Coq'Art:
  The Calculus of Inductive Constructions}.
\newblock Texts in Theoretical Computer Science. Springer Verlag, 2004.

\bibitem[BCP03]{Capretta}
Gilles Barthe, Venanzio Capretta, and Olivier Pons.
\newblock Setoids in type theory.
\newblock {\em J. Funct. Programming}, 13(2):261--293, 2003.
\newblock Special issue on ``Logical frameworks and metalanguages''.

\bibitem[Bec69]{Beck:1969}
J.~Beck.
\newblock Distributive laws.
\newblock In B.~Eckman, editor, {\em Seminar on Triples and Categorical
  Homology Theory}, number~80 in Lecture Notes in Mathematics, pages 119--140.
  Springer, Berlin, 1969.

\bibitem[Ber09]{berger:2009}
Ulrich Berger.
\newblock {From coinductive proofs to exact real arithmetic}.
\newblock In {\em Computer Science Logic}, pages 132--146. Springer, 2009.

\bibitem[Bis67]{Bishop67}
Errett~A. Bishop.
\newblock {\em {Foundations of constructive analysis}}.
\newblock McGraw-Hill Publishing Company, Ltd., 1967.

\bibitem[Bis70]{Bishop:num}
Errett Bishop.
\newblock Mathematics as a numerical language.
\newblock In {\em Intuitionism and Proof Theory (Proceedings of the summer
  Conference at Buffalo, N.Y., 1968)}, pages 53--71. North-Holland, Amsterdam,
  1970.

\bibitem[BJ06]{lambdabot}
Andrew~J. Bromage and Thomas J{\"a}ger.
\newblock Lambdabot.
\newblock \url{http://www.cse.unsw.edu.au/~dons/lambdabot.html}, 2006.

\bibitem[BK09]{BauerKavkler}
Andrej Bauer and Iztok Kavkler.
\newblock A constructive theory of continuous domains suitable for
  implementation.
\newblock {\em Ann. Pure Appl. Logic}, 159(3):251--267, 2009.

\bibitem[BW05]{TTT}
Michael Barr and Charles Wells.
\newblock Toposes, triples and theories.
\newblock {\em Repr. Theory Appl. Categ.}, (12):x+288 pp., 2005.
\newblock Corrected reprint of the 1985 original [MR0771116].

\bibitem[CF03]{lcf:03}
L.~Cruz-Filipe.
\newblock A constructive formalization of the fundamental theorem of calculus.
\newblock In H.~Geuvers and F.~Wiedijk, editors, {\em Types for Proofs and
  Programs}, volume 2646 of {\em LNCS}, pages 108--126. Springer--Verlag, 2003.

\bibitem[CF04]{lcf:04}
L.~Cruz-Filipe.
\newblock {\em Constructive Real Analysis: a Type-Theoretical Formalization and
  Applications}.
\newblock PhD thesis, University of Nijmegen, April 2004.

\bibitem[CFGW04]{corn}
L.~Cruz-Filipe, H.~Geuvers, and F.~Wiedijk.
\newblock C-corn: the constructive coq repository at nijmegen.
\newblock In A.~Asperti, G.~Bancerek, and A.~Trybulec, editors, {\em
  Mathematical Knowledge Management, Third International Conference, MKM 2004},
  volume 3119 of {\em LNCS}, pages 88--103. Springer--Verlag, 2004.

\bibitem[CFS03]{lcf:spi:03}
L.~Cruz-Filipe and B.~Spitters.
\newblock Program extraction from large proof developments.
\newblock In D.~Basin and B.~Wolff, editors, {\em Theorem Proving in Higher
  Order Logics, 16th International Conference, TPHOLs 2003}, volume 2758 of
  {\em LNCS}, pages 205--220. Springer--Verlag, 2003.

\bibitem[CH88]{CoquandHuet}
Thierry Coquand and G{\'e}rard Huet.
\newblock The calculus of constructions.
\newblock {\em Inform. and Comput.}, 76(2-3):95--120, 1988.

\bibitem[Coe04]{Coen:2004}
Claudio~Sacerdoti Coen.
\newblock A semi-reflexive tactic for (sub-)equational reasoning.
\newblock In Jean-Christophe Filli{\^a}tre, Christine Paulin-Mohring, and
  Benjamin Werner, editors, {\em TYPES}, volume 3839 of {\em Lecture Notes in
  Computer Science}, pages 98--114. Springer, 2004.

\bibitem[CP90]{CoquandPaulin}
Thierry Coquand and Christine Paulin.
\newblock Inductively defined types.
\newblock In {\em C{OLOG}-88 ({T}allinn, 1988)}, volume 417 of {\em Lecture
  Notes in Comput. Sci.}, pages 50--66. Springer, Berlin, 1990.

\bibitem[Eda99]{edalat:1999}
Abbas Edalat.
\newblock Numerical integration with exact real arithmetic.
\newblock In {\em Automata, Languages and Programming, 26th International
  Colloquium, ICALP99, Prague, Czech 227 Republic, July 11-15, 1999,
  Proceedings, volume 1644 of Lecture Notes in Computer Science}, pages
  90--104. Springer, 1999.

\bibitem[GL02]{Compiler}
Benjamin Gr{\'e}goire and Xavier Leroy.
\newblock A compiled implementation of strong reduction.
\newblock In {\em ICFP}, pages 235--246, 2002.

\bibitem[GNSW07]{typesreal-article}
Herman Geuvers, Milad Niqui, Bas Spitters, and Freek Wiedijk.
\newblock Constructive analysis, types and exact real numbers (overview
  article).
\newblock {\em Mathematical Structures in Computer Science}, 17(1):3--36, 2007.

\bibitem[Hof97]{Hofmann}
Martin Hofmann.
\newblock {\em Extensional constructs in intensional type theory}.
\newblock CPHC/BCS Distinguished Dissertations. Springer-Verlag London Ltd.,
  London, 1997.

\bibitem[JD93]{Jones:1993}
Mark~P. Jones and Luke Duponcheel.
\newblock Composing monads.
\newblock Technical Report YALEU/DCS/RR-1004, Yale University, 1993.

\bibitem[KO08]{cekp8}
Cezary Kaliszyk and Russell O'Connor.
\newblock Computing with classical real numbers.
\newblock Submitted for publication to the Journal of Automated Reasoning,
  2008.

\bibitem[Lan93]{lang:1993}
Serge Lang.
\newblock {\em Real and Functional Analysis}.
\newblock Springer, 1993.

\bibitem[ML82]{CMCP}
Per Martin-L\"{o}f.
\newblock Constructive mathematics and computer programming.
\newblock In {\em Logic, methodology and philosophy of science, VI (Hannover,
  1979)}, volume 104 of {\em Stud. Logic Found. Math.}, pages 153--175.
  North-Holland, Amsterdam, 1982.

\bibitem[ML98]{ITT}
Per Martin-L\"{o}f.
\newblock An intuitionistic theory of types.
\newblock In {\em Twenty-five years of constructive type theory (Venice,
  1995)}, volume~36 of {\em Oxford Logic Guides}, pages 127--172. Oxford Univ.
  Press, 1998.

\bibitem[MM04]{mcbride:2004}
Conor McBride and James McKinna.
\newblock The view from the left.
\newblock {\em Journal of Functional Programming}, 14(1):69--111, 2004.

\bibitem[Mog89]{moggi:1989}
E.~Moggi.
\newblock Computational lambda-calculus and monads.
\newblock In {\em Proceedings of the Fourth Annual Symposium on Logic in
  computer science}, pages 14--23, Piscataway, NJ, USA, 1989. IEEE Press.

\bibitem[MP08]{mcbride:2008}
Conor McBride and Ross Paterson.
\newblock Applicative programming with effects.
\newblock {\em J. Funct. Program.}, 18(1):1--13, 2008.

\bibitem[NPS90]{NPS}
Bengt Nordstr{\"o}m, Kent Petersson, and Jan~M. Smith.
\newblock {\em Programming in {M}artin-{L}\"of's type theory}, volume~7 of {\em
  International Series of Monographs on Computer Science}.
\newblock The Clarendon Press Oxford University Press, New York, 1990.
\newblock An introduction.

\bibitem[O'C07]{OConnor:mscs}
Russell O'Connor.
\newblock A monadic, functional implementation of real numbers.
\newblock {\em Mathematical Structures in Computer Science}, 17(1):129--159,
  2007.

\bibitem[O'C08a]{OConnor:real}
Russell O'Connor.
\newblock Certified exact transcendental real number computation in {C}oq.
\newblock In Otmane Ait-Mohamed, editor, {\em TPHOLs}, volume 5170 of {\em
  Lecture Notes in Computer Science}, pages 246--261. Springer, 2008.

\bibitem[O'C08b]{OConnor:compact}
Russell O'Connor.
\newblock A computer verified theory of compact sets.
\newblock In Bruno Buchberger, Tetsuo Ida, and Temur Kutsia, editors, {\em SCSS
  2008}, number 08-08 in RISC-Linz Report Series, pages 148--162, Castle of
  Hagenberg, Austria, July 2008. RISC.

\bibitem[Ric08]{Richman:2008}
Fred Richman.
\newblock Real numbers and other completions.
\newblock {\em Math. Log. Q.}, 54(1):98--108, 2008.

\bibitem[Sam87]{Sambin:1987}
Giovanni Sambin.
\newblock Intuitionistic formal spaces - a first communication.
\newblock In D.~Skordev, editor, {\em Mathematical logic and its Applications},
  pages 187--204. Plenum, 1987.

\bibitem[Sch05]{Schroder:2005}
L.~Schr{\"o}der.
\newblock Expressivity of coalgebraic modal logic: The limits and beyond.
\newblock In V.~Sassone, editor, {\em Foundations of Software Science and
  Computational Structures}, number 3441 in Lecture Notes in Mathematics, pages
  440--454. Springer, Berlin, 2005.

\bibitem[Sch08]{Schwichtenberg}
Helmut Schwichtenberg.
\newblock Realizability interpretation of proofs in constructive analysis.
\newblock {\em Theor. Comp. Sys.}, 43(3):583--602, 2008.

\bibitem[Sim98]{simpson:1998}
Alex~K. Simpson.
\newblock Lazy functional algorithms for exact real functionals.
\newblock {\em Lecture Notes in Computer Science}, 1450:456--464, 1998.

\bibitem[SU98]{Sorensen}
M.~S\"orensen and P.~Urzyczyn.
\newblock {\em Lectures on the {C}urry-{H}oward isomorphism}.
\newblock Elsevier, 1998.

\bibitem[Tea08]{Coq}
The Coq~Development Team.
\newblock {\em The {C}oq Proof Assistant Reference Manual}.
\newblock INRIA-Rocquencourt, 2008.

\bibitem[Tho91]{Thompson:1991}
S.~Thompson.
\newblock {\em Type Theory and Functional Programming}.
\newblock Addison Wesley, 1991.

\bibitem[Wad92a]{Wadler92a}
P.~Wadler.
\newblock Monads for functional programming.
\newblock In {\em {Proceedings of the Marktoberdorf Summer School on Program
  Design Calculi}}, August 1992.

\bibitem[Wad92b]{Wadler92b}
Philip Wadler.
\newblock Comprehending monads.
\newblock {\em Mathematical Structures in Computer Science}, 2(4):461--493,
  1992.

\end{thebibliography}

\end{document}